\pgfplotsset{compat=1.5.1}
\DeclareMathOperator*{\argmax}{arg\,max}
\DeclareMathOperator{\RMSE}{RMSE}
\def\minimize{\qopname\relax m{minimize}}
\newcommand{\N}{\operatorname{\mathbb{N}}}
\newcommand{\R}{\operatorname{\mathbb{R}}}
\newcommand{\E}{\operatorname{E}}
\newcommand{\Var}{\operatorname{Var}}
\newcommand{\Po}{\operatorname{Po}}
\newcommand{\Prob}{\operatorname{Pr}}
\newcommand{\euler}{e}
\newcommand{\ramuno}{i}
\newcommand{\abs}[1]{\lvert#1\rvert}	
\theoremstyle{remark}\newtheorem{remark}{Remark}
\theoremstyle{plain}\newtheorem{theorem}{Theorem}
\theoremstyle{plain}\newtheorem{lemma}{Lemma}
\theoremstyle{plain}
\theoremstyle{definition}\newtheorem{example}{Example}
\theoremstyle{definition}\newtheorem*{example*}{Example}
\theoremstyle{definition}\newtheorem*{notation*}{Notation}
\theoremstyle{remark}
\theoremstyle{definition}\newtheorem*{solution*}{Solution}
\theoremstyle{definition}
\let\oldmarginpar\marginpar
\renewcommand\marginpar[1]{\-\oldmarginpar[\raggedleft\footnotesize #1]%
{\raggedright\footnotesize #1}}
\DeclareMathOperator{\Exp}{Exp}
\DeclareMathOperator{\Lognormal}{LN}
\DeclareMathOperator{\TLN}{TLN}
\newcommand{\rmi}{^{-1}}
\newcommand{\rmh}{^{\dag}}
\newcommand{\rmt}{^{\operatorname{T}}}
\newcommand{\rmc}{^{*}}
\renewcommand{\vec}[1]{\boldsymbol{#1}}
\newcommand{\op}[1]{{\operatorname{#1}}}
\DeclareMathOperator{\crlb}{CRLB}
\DeclareMathOperator{\dist}{d}
\newcommand{\ZeroToInfty}{[0,\infty)}
\begin{document}
\bstctlcite{IEEEexample:BSTcontrol}
\title{\Huge Massive {MIMO} Extensions to the\vspace*{-.7cm}\newline {COST}~2100 Channel Model:\vspace*{-.7cm}\newline Modeling and Validation}%
\author{Jose~Flordelis,~\IEEEmembership{Student Member,~IEEE,} Xuhong Li,~\IEEEmembership{Student Member,~IEEE,} Ove~Edfors,~\IEEEmembership{Senior Member,~IEEE,} and Fredrik~Tufvesson,~\IEEEmembership{Fellow,~IEEE}\thanks{This work was supported by the Seventh Framework Programme (FP7) of the European Union under grant agreement no. 619086 (MAMMOET), ELLIIT---an Excellence Center at Link{\"o}ping-Lund in Information Technology, the Swedish Research Council (VR), and the Swedish Foundation for Strategic Research (SSF). Some initial work has been presented in~\cite{Gao:2015:Extension} and MAMMOET D1.2~\cite{MAMMOET:D1.2:2015}.}\thanks{Department of Electrical and Information Technology, Lund University, SE-221~00 Lund, Sweden (e-mail: \emph{firstname}.\emph{lastname}@eit.lth.se).}}%
\maketitle
\vspace*{-1.6cm}\begin{abstract}To enable realistic studies of massive multiple-input multiple-output systems, the COST \nolinebreak[1]2100 channel model is extended based on measurements. First, the concept of a base station-side visibility region (BS-VR) is proposed to model the appearance and disappearance of clusters when using a physically-large array. We find that BS-VR lifetimes are exponentially distributed, and that the number of BS-VRs is Poisson distributed with intensity proportional to the sum of the array length and the mean lifetime. Simulations suggest that under certain conditions longer lifetimes can help decorrelating closely-located users. Second, the concept of a multipath component visibility region (MPC-VR) is proposed to model birth-death processes of individual MPCs at the mobile station side. We find that both MPC lifetimes and MPC-VR radii are lognormally distributed. Simulations suggest that unless MPC-VRs are applied the channel condition number is overestimated. Key statistical properties of the proposed extensions, e.g., autocorrelation functions, maximum likelihood estimators, and Cramer-Rao bounds, are derived and analyzed.\end{abstract}

\begin{IEEEkeywords}Massive MIMO, channel measurements, channel model, large arrays, closely-located users, non-stationarity, birth-death process.\end{IEEEkeywords}
\renewcommand{\baselinestretch}{1.4}
\section{Introduction}

\IEEEPARstart{M}{assive} multiple-input multiple-output (MIMO), MaMi for short, has since its inception~\cite{Marzetta:2010:massive} attracted considerable attention in the wireless communication community~\cite{Rusek:2013:massive,Marzetta:2016:redbook,SIG-093}. Within the last few years, an abundant body of theoretical~\cite{Marzetta:2016:redbook,SIG-093} and experimental~\cite{Gao:2015:MAMI,MalkowskyAccess17} research has shown that MaMi systems can improve the energy and spectral efficiencies of today's wireless communication systems by one to two orders of magnitude. For this reason, MaMi is considered a crucial component of the new radio (NR) air interface of the fifth generation (5G) wireless communication standard~\cite{Zhang:2018:MaMiNR}.

As is well known, the radio propagation channel ultimately dictates the performance of any wireless communication system. The availability of sufficiently accurate propagation channel models is therefore of critical importance to the design and evaluation of new wireless systems. The 3GPP SCM-3D~\cite{SCM-3D}, the WINNER II/WINNER+~\cite{WINNERII:2008,WINNERPlus:2010}, and the COST \nolinebreak[1]2100~\cite{VerdoneCOST12,Cardona:2016:COST} are examples of channel models customarily used in the development and validation of 5G networks. Among them, the Quasi Deterministic Radio Channel Generator (QuaDRiGa)~\cite{Jaeckel:2014:QuaDRiGa,QuaDRiGa}, an extension of the 3GPP-3D and WINNER II/WINNER+ channel models, is especially popular because of its enhanced spatiotemporal consistency. Moreover, it possesses certain features that enable MaMi simulations, namely support for spherical wavefronts, dispersion of clusters in elevation, and independence of arrival/departure angles~\cite{QuaDRiGa}.

Experiment shows~\cite{Payami:2012:LSF,Gao:2013:asilomar,Flordelis:acc:18} that at least two additional aspects of the MaMi propagation channel need to be considered. The first of them~\cite{Payami:2012:LSF,Gao:2013:asilomar} is the presence at the base station (BS) of non-stationarities caused by the appearance and disappearance of clusters along physically-large arrays (PLAs). Such BS-side non-stationarities have been addressed in theoretical MaMi channel models~\cite{Wu:2014:MaMiModel,Lopez:2018:MaMiModel} in which the appearances and disappearances of clusters are typically modeled as Markov birth-death processes obeying certain survival probabilities. However, multiuser consistency becomes problematic in such models, because the Markov birth-death processes are not linked to the geometry of the environment. Whereas the original QuaDRiGa model neither considers BS-side non-stationarities, nor ensures multiuser consistency, some progress has been made by Oliveras {\sl et al.}~\cite{Oliveras:2016:pimrc} to remedy these difficulties. However, the novel ideas presented in~\cite{Oliveras:2016:pimrc} are yet to be validated by experiments, and integrated in the mainstream QuaDRiGa model.

Secondly, it appears~\cite{Chong:lifetime:05,Salmi:rimax:09,Zentner:2010:lifetime,Wang:2012:lifetime,Zhu:2015:tracking,He:2015:lifetime_model,Wang:2015:lifetime,Mahler:2016:lifetime,Li:2018:ekf} that, just as clusters, individual multipath components (MPCs) undergo birth-death processes. The MPC lifetimes when tracked along a straight-line segment at the mobile station (MS) side have been found to obey certain probability distributions, e.g., the exponential distribution~\cite{Chong:lifetime:05,He:2015:lifetime_model,Li:2018:ekf}, the lognormal distribution~\cite{Wang:2012:lifetime,Wang:2015:lifetime,Mahler:2016:lifetime,Li:2018:ekf}, and the Birnbaum-Sanders distribution~\cite{Mahler:2016:lifetime,Li:2018:ekf}. More important, closely-located users only a few wavelengths apart can encounter different MPCs~\cite{Flordelis:acc:18}. Channel models that incorporate birth-death processes at the level of individual MPCs are of interest to emerging applications such as MaMi communications~\cite{Adhikary:2013:JSDM,Junyoung:2014:JSDM}, vehicle-to-vehicle communications~\cite{He:2015:lifetime_model,Mahler:2016:lifetime}, or multipath-assisted localization~\cite{Wang:2012:lifetime,Zhu:2015:tracking,Li:2018:ekf,Leitinger:2018:rfslam}, which can use this information to improve their performance. To the best of our knowledge, none of the earlier referred 5G channel models currently considers birth-death processes for individual MPCs.

In this paper, MaMi extensions of the COST \nolinebreak[1]2100 channel model are presented that describe BS-side non-stationarities of the clusters and MS-side birth-death processes of individual MPCs, introduced above. The main contributions of the paper are the following:
\begin{itemize}
\item The notion of BS-side visibility region (BS-VR) is proposed to model BS-side non-sta\-tion\-ar\-i\-ties. Measured outdoor MaMi channels show that the number of BS-VRs is Poisson distributed and their lifetimes obey an exponential law.
\item The maximum likelihood estimator and the Cramer-Rao bound of the parameters of the BS-VRs are rigorously derived. A closed-form expression of the autocorrelation function (ACF) of the number of the number of observed BS-VRs is provided.
\item The influence of the choice of BS-VR parameters on the MaMi propagation channel is investigated through simulations. We find that under certain conditions longer BS-VRs can help decorrelating closely-located users.
\item The notions of MPC-VR and MPC gain function are proposed to model birth-death processes of individual MPCs at the MS side. Measured indoor MaMi channels reveal that MPC lifetimes and MPC-VR radii are lognormally distributed.
\item The small-scale fading ACF with the MPC gain function is always smaller than without it. When the gain function is applied to a group of closely-located users their condition number decreases by a few dB and comes to agree with measurements.
\end{itemize}
The present work extends our preliminary studies in~\cite{Gao:2015:Extension} with further analyses, revised insights, and new measurements to characterize the MPC gain function. A MATLAB implementation of the COST \nolinebreak[1]2100 model with the proposed extensions is freely available at~\cite{COST2100:GitHub}.

\section{Notation}
Throughout the paper, boldface lowercase letters represent column vectors, boldface uppercase letters matrices, and calligraphic letters sets. Thus, $\vec{I}$ denotes the identity matrix, $\vec{A}\rmt$ the transpose, $\vec{A}\rmh$ the Hermitian transpose, $\ramuno$ the imaginary unit, $\R^+$ the set $\ZeroToInfty,$ $\Prob(A)$ the probability of event $A,$ $1_A(\cdot)$ the indicator function of $A$, $\E(\cdot)$ the expectation operator, and $\Phi(x)$ the CDF of $\mathcal{N}(0,1).$

\section{The {COST}~2100 {MIMO} Channel Model}\label{sec_COST2100_intro}

We briefly review some concepts and terminology of the COST \nolinebreak[1]2100 model. For further details, the reader is referred to~\cite{VerdoneCOST12,Cardona:2016:COST}. The COST \nolinebreak[1]2100 model falls into the category of geometry-based stochastic channel models (GSCM)~\cite{MolischETT03}, which are built around the notion of a \emph{cluster}, i.e., a group of MPCs with similar delay and angular parameters~\cite{CzinkThesis07}. Clusters model interactions of the transmit signal with scattering objects in the environment such as building facades, trees and street furniture, in outdoor environments, or inner walls, pillars and office equipment, in indoor settings. Interactions happen at so-called \emph{scattering points}. In GSCMs, the locations of clusters and scattering points are drawn randomly from prescribed multivariate distributions depending on the simulated environment. MPCs are then obtained by mere geometric ray tracing of the transmit signal through the scattering points~\cite{VerdoneCOST12}.

The distinguishing feature of the COST \nolinebreak[1]2100 model is perhaps the use of so-called visibility regions (VRs) to model non-stationarities in a spatiotemporally consistent way. When a MS moves inside a VR, the associated cluster is active and its MPCs contribute to the MS-BS double-directional impulse response. Similarly, when the MS moves out of a VR, the associated cluster is no longer visible and it does not contribute. The situation is depicted in Fig.~\ref{fig_MSVR}.
The two MaMi channel aspects alluded to in the introduction are also shown. The first aspect, illustrated in Fig.~\ref{fig_BSVR}, has to do with the presence of non-stationarities at the BS side in MaMi settings with PLAs~\cite{Payami:2012:LSF,Gao:2013:asilomar}; the second one, illustrated in Fig.~\ref{fig_MPCVR}, concerns the realization that individual MPCs undergo birth-death processes at the MS side~\cite{Chong:lifetime:05,Salmi:rimax:09,Zentner:2010:lifetime,Wang:2012:lifetime,Zhu:2015:tracking,He:2015:lifetime_model,Wang:2015:lifetime,Mahler:2016:lifetime,Li:2018:ekf}. To model these phenomena, the COST \nolinebreak[1]2100 channel model is extended with the concepts of BS-VRs, in Sec.~\ref{sec_large}, and MPC-VRs and MPC gain functions, in Sec.~\ref{sec_extension_2}.

\begin{figure*}[!ht]
  \psfrag{A}[][]{\small{A}}
  \psfrag{B}[][]{\small{B}}
  \psfrag{C}[][]{\small{C}}
  \psfrag{D}[][]{\small{D}}
  \psfrag{BS}[][]{\small{BS}}
  \psfrag{msroute}[][]{\small{MS route}}
  \centering
  \hspace*{\fill}
  \subfloat[]{\centering\label{fig_MSVR}\includegraphics[width=.31\textwidth]{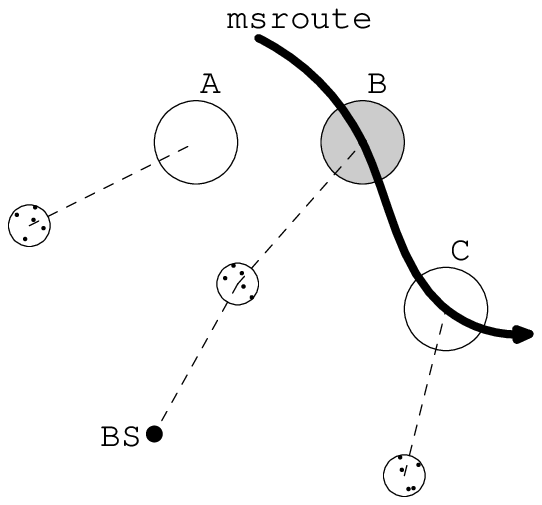}}\hfill
  \subfloat[]{\centering\label{fig_BSVR}\includegraphics[width=.31\textwidth]{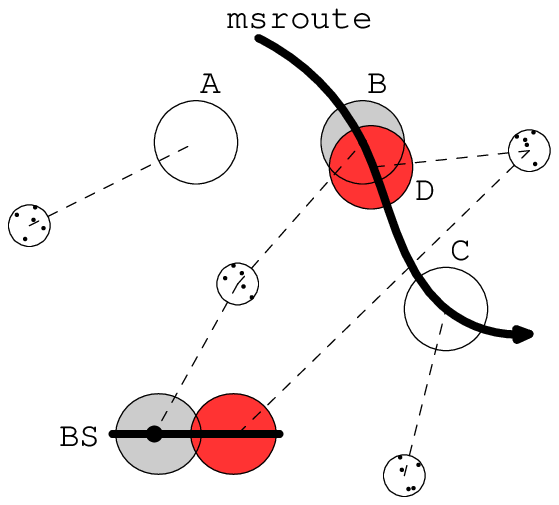}}\hfill
  \subfloat[]{\centering\label{fig_MPCVR}\includegraphics[width=.31\textwidth]{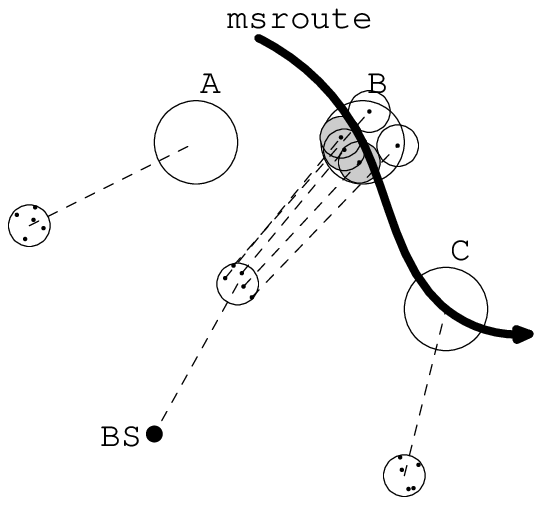}}
  \hspace*{\fill}
  \caption{A MS moves in and out of various VRs, labeled A, B, C, and D. (a) When the MS enters a VR, the associated cluster becomes active (VR B). (b) When a PLA is used, different parts can observe different sets of clusters (VR B, or VR D, or VRs B and D), as determined by BS-VRs. (c) The relative gain of individual MPCs is controlled by MPC-VRs, one for each MPC.}
  \label{fig_VR}
\end{figure*}

\section{Extension for Physically-Large Arrays}\label{sec_large}

\subsection{Measurements and Data Processing}\label{sec_bsvr_data}

Channels used in this section were measured at the parking lot outside the E-building of the Faculty of Engineering of Lund University, Lund, Sweden. At the BS side, a 7.5~m-long, 128-element virtual uniform linear array (ULA) was mounted on a rooftop, three floors above the ground level. Eight MS sites, four in line-of-sight (LOS) propagation conditions to the BS and four in non-LOS (NLOS), were considered. At each site, the radio channel between the ULA and a single-antenna MS was measured at five locations interspaced by 0.5~m. Measurements were acquired at a carrier frequency of 2.6~GHz, and over a bandwidth of 50~MHz. All antennas were vertically-polarized and omnidirectional on the horizontal plane.

To extract clusters of MPCs, the measured channels were subsequently processed in the following way. First, the ULA was partitioned into sets of neighboring antennas by using a 10 element-long sliding window. Then, MPC parameters (i.e., delay, azimuth angle-of-arrival, and complex amplitudes) were estimated by applying the SAGE~\cite{Fleury:1999:SAGE} super-resolution channel estimation algorithm to each window. Finally, clusters were identified using the KPowerMeans~\cite{CzinkThesis07} joint clustering and tracking algorithm. Clusters containing less than 2.5\% of the total power in any window, or surviving less than 5 windows, were discarded. See~\cite{Gao:2015:MAMI} for further details.

\subsection{System Model and Assumptions}\label{sec_model}

Because of the nature of the measurements available, the scope of this section is limited to ULAs. However, the results obtained below can be extended to the case of two-dimensional arrays by applying, e.g., the techniques demonstrated in Sec.~\ref{dist_radii}. Such an extension is best addressed with complementary measurements and therefore we leave it for future work.

Let the ULA be located at $0\leq x_1\le x\le x_2$, and assume that BS-VRs can be modeled as arising from a birth-death process having the following properties:
\begin{enumerate}\item[A1.] Birth events (i) are independent from each other; (ii) for positive increments~$h$, $\Prob($exactly one birth event during$\ (x,x+h]) = \lambda h + o(h)$ as $h\to 0$ for some $\lambda>0$ called the birth rate; and (iii) $\Prob($at least two birth events during$\ (x,x+h])=o(h)$ as $h\to 0$. By~\cite[Chapter 8, Definition II]{Gut:2009:P}, this is equivalent to the assertion that the birth count process $\{N_\op{birth}(x),x\ge 0\}$, where
      \begin{align}\label{eq_N_birth}
        N_\op{birth}(x) = \text{ \# births events in } (0,x],
      \end{align}
      is a Poisson process with intensity~$\lambda>0$.
\item[A2.] The \emph{true} lifetimes of BS-VRs are (i) independent copies of some nonnegative random variable~$Y$ with probability density function (PDF) $f_Y:\R^+\to\R^+$ with finite first moment; (ii) independent of birth events.
\item[A3.] In experiments one can only observe (perhaps partially) BS-VRs that overlap with~$[x_1,x_2]$. Hence, the outcome of any experiment is a sequence of intervals $[a_1,b_1],\ldots,[a_n,b_n]\subseteq[x_1,x_2]$, where $a_1\leq\ldots\leq a_n$ are the \emph{observed} birth positions, $\upsilon_1=b_1-a_1,\ldots,\upsilon_n=b_n-a_n$ the \emph{observed} lifetimes, and $n$ the number of observed BS-VRs.
\end{enumerate}

\subsection{Distribution of the Number of BS-VRs}\label{sec_num_bsvr}
Let the random variable $N(x_1,x_2)$ denote the number of BS-VRs observed in the interval $[x_1,x_2].$ Then, some natural questions to ask are: What is the expected number of observed BS-VRs, $\E(N(x_1,x_2))$? And, how does this number depend on the interval length~$x_2-x_1\ge 0$? To answer these questions, in this section we derive the distribution function of $N(x_1,x_2)$. We start by noting that the total number of observed BS-VRs can be written as the sum of two independent contributions:
\begin{itemize}
\item The number, $N_\op{new}(x_1,x_2)$, of BS-VRs ``born'' in the interval $(x_1,x_2]$. Clearly, this quantity can be obtained by the difference
  \begin{align*}
    N_\op{new}(x_1,x_2) = N_\op{birth}(x_2)-N_\op{birth}(x_1).  
  \end{align*}
  Since $N_\op{birth}(x)$ is, by hypothesis, a Poisson process, it follows that
  \begin{align}\label{eq_N_new}
    N_\op{new}(x_1,x_2) \in \Po(\lambda\cdot (x_2-x_1)),  
  \end{align}
  where $\Po(m)$ denotes the Poisson distribution with parameter $m\ge 0$ and probability mass function (PMF)
  \begin{align}
    p(n) = \euler^{-m}\frac{m^n}{n!},\quad n=0,1,2,\ldots,
  \end{align}
  where one should assign the value 1 to $0^0$.
\item The number, $N_\op{alive}(x_1)$, of BS-VRs born before, and still alive at, $x=x_1.$ The distribution of $N_\op{alive}(x)$ for $x_1$ sufficiently far away from $x=0$ is given as part of the proof of Theorem~\ref{theo_N}.
\end{itemize}
The following result is a key ingredient of our proposed channel model extension for PLAs.

\begin{theorem}[Number of observed BS-VRs for PLAs]\label{theo_N}Let the assumptions in Sec.~\ref{sec_model} hold. Then, the number of observed BS-VRs in the interval~$[x_1,x_2]$ has the distribution
    \begin{align}\label{eq_N_total}
      N(x_1,x_2) \in \Po(\lambda\cdot(x_2-x_1) + \lambda\cdot\E(Y)).
    \end{align}
\end{theorem}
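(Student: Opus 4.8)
The plan is to use the decomposition $N(x_1,x_2)=N_\op{new}(x_1,x_2)+N_\op{alive}(x_1)$ recorded just above the statement. By the independent-increments property of the Poisson birth process (A1), together with the mutual independence of the lifetimes and their independence of the birth positions (A2), the count $N_\op{new}(x_1,x_2)$ --- a function only of the birth points lying in $(x_1,x_2]$ --- is independent of $N_\op{alive}(x_1)$ --- a function only of the birth points in $(0,x_1]$ and their attached lifetimes. Since a sum of two independent Poisson variables is Poisson with the sum of the parameters, and $N_\op{new}(x_1,x_2)\in\Po(\lambda(x_2-x_1))$ by \eqref{eq_N_new}, the theorem reduces to showing that $N_\op{alive}(x_1)\in\Po(\lambda\,\E(Y))$ once $x_1$ is taken far enough from the origin.

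To obtain the law of $N_\op{alive}(x_1)$ I would condition on $N_\op{birth}(x_1)=k$. Given this event, the $k$ birth positions are i.i.d.\ $\U(0,x_1)$ by the order-statistics property of a homogeneous Poisson process. A BS-VR born at a position $A\sim\U(0,x_1)$ with lifetime $Y$ is still alive at $x_1$ with probability
\[
p(x_1)=\Prob(A+Y>x_1)=\frac{1}{x_1}\int_0^{x_1}\Prob(Y>u)\,du ,
\]
and, since the (position, lifetime) pairs are independent across BS-VRs, these survival events are independent. Hence, conditionally on $k$ births, $N_\op{alive}(x_1)$ is $\mathrm{Binomial}(k,p(x_1))$; averaging over $N_\op{birth}(x_1)\in\Po(\lambda x_1)$ is precisely an independent thinning of a Poisson variable, which gives
\[
N_\op{alive}(x_1)\in\Po\!\left(\lambda x_1\,p(x_1)\right)=\Po\!\left(\lambda\int_0^{x_1}\Prob(Y>u)\,du\right).
\]

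To finish, I would let $x_1$ grow: because $Y\ge 0$ has finite mean (A2), the tail integral $\int_0^{x_1}\Prob(Y>u)\,du$ increases to $\int_0^\infty\Prob(Y>u)\,du=\E(Y)$, so for $x_1$ sufficiently far from $x=0$ the parameter $\lambda x_1 p(x_1)$ is as close to $\lambda\,\E(Y)$ as desired; passing to this limit and adding the independent contribution $N_\op{new}(x_1,x_2)$ yields \eqref{eq_N_total}. I expect this limiting step to be the only delicate point: for finite $x_1$ the alive-count is exactly $\Po$ with parameter $\lambda\int_0^{x_1}\Prob(Y>u)\,du$, and the clean form $\lambda\,\E(Y)$ is exact only as $x_1\to\infty$ --- equivalently, if one imagines the birth process run over all of $\R$ rather than $\ZeroToInfty$ --- so the proof should say explicitly that the hypothesis ``$x_1$ far from the origin'' is what licenses replacing the tail integral by $\E(Y)$. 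As a shortcut for the whole argument one may instead mark each Poisson birth with its lifetime to obtain a Poisson process on $\ZeroToInfty\times\R^+$ with intensity $\lambda\,da\,f_Y(y)\,dy$ and apply the restriction theorem to the set $\{a\le x_2,\ a+y\ge x_1\}$, whereupon the mean comes out as $\lambda(x_2-x_1)+\lambda\int_0^{x_1}\Prob(Y>u)\,du$ directly; everything else (the thinning identity, the independence of $N_\op{new}$ and $N_\op{alive}$) is routine.
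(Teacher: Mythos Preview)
Your proposal is correct and follows essentially the same route as the paper: decompose $N(x_1,x_2)=N_\op{new}+N_\op{alive}$, reduce to showing $N_\op{alive}(x_1)\in\Po(\lambda\int_0^{x_1}\Prob(Y>u)\,du)$ via Poisson thinning, and then let $x_1\to\infty$ so that the tail integral becomes $\E(Y)$. The only cosmetic difference is that the paper phrases the thinning step as passing to a non-homogeneous Poisson process with intensity $\lambda\,\Prob(Y\ge x_1-x)$, whereas you carry it out explicitly by conditioning on $N_\op{birth}(x_1)$, invoking the uniform order-statistics property, and averaging the resulting binomial; the two arguments are equivalent, and your remark that the clean parameter $\lambda\,\E(Y)$ is exact only in the limit matches the paper's own treatment.
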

\begin{proof}By (A2) and the fact that the sum of two independent Poisson-distributed random variables is also Poisson-distributed, it suffices to show that $N_\op{alive}(x_1)\in\Po(\lambda\cdot\E(Y)).$

  Clearly, only birth events $N_\op{birth}(x_1)$ that lead to BS-VRs extending past $x=x_1$ contribute to $N_\op{alive}(x_1).$ Let $0\leq u<x_1.$ Then, BS-VRs emanating from a neighborhood $(u,u+h]$ of~$u$ are observed at $x=x_1$ with probability
    \begin{align}
      P(Y\ge x_1-u) = 1-F_Y(x_1-u).  
    \end{align}
    Since $\{N_\op{birth}(x),x\ge 0\}$ has independent increments and BS-VR lifetimes are generated independently of each other, it follows that the surviving BS-VRs may be interpreted as arising from a non-homogeneous Poisson process $\{N'_\op{birth}(x), x\ge 0\}$ with position-dependent intensity $\lambda(x)$ given by
  \begin{align}\label{eq_lambda}
    \lambda(x) = \lambda\cdot P(Y\ge x_1-x).
  \end{align}
  From the theory of non-homogeneous Poisson processes~\cite{Gut:2009:P} we obtain
  \begin{align*}
    N_\op{alive}(x_1) = N'_\op{birth}(x_1)-N'_\op{birth}(0) = \Po(\lambda\cdot\int_0^{x_1} P(Y\ge x_1-u)\,du).
  \end{align*}
  Finally, assuming that the starting point of the array can be chosen as far from $x=0$ as desired produces
  \begin{align}\label{eq_N_alive}
    N_\op{alive}(x_1)
    = \lim_{x_1\to\infty} \Po(\lambda\cdot\int_0^{x_1} P(Y\ge u)\,du)
    = \Po(\lambda\cdot\E(Y)),
  \end{align}
  which completes the proof.  
\end{proof}

\begin{remark}Note that Theorem~\ref{theo_N} gives us the answers we seek, namely that for PLAs, the expected number of observed BS-VRs,~$\E(N(x_1,x_2))$, depends on the environment through (i) the intensity $\lambda>0$ of the underlying Poisson process, and (ii) the mean $\E(Y)$ of the true BS-VR lifetimes; and on the system through (iii) the length $L=x_2-x_1$ of the PLA. In particular, note that the specific shape of the distribution $f_Y:\R^+\to\R^+$ does not matter.\end{remark}

\begin{remark}In this paper we are interested mainly in extensions of the COST \nolinebreak[1]2100 model supporting MaMi, but it is also instructive to compare these with the ``conventional'' COST \nolinebreak[1]2100 model, which only considers physically-compact arrays (PCAs). Observe that
  \begin{align}
    N(x_1,x_2) \to N_\op{alive}(x)\text{ as }x_1,x_2\to x,    
  \end{align}
  so that the number of VRs observed by a PCA placed anywhere along the line of the PLA is
  \begin{align}\label{eq_N_total_compact}
    N(x) \in \Po(\lambda\cdot\E(Y)).
  \end{align}
  As a matter of fact, the conventional COST \nolinebreak[1]2100 models the number of \emph{visible far clusters} (and thus of VRs) as being Poisson-distributed with a certain prescribed intensity which depends on the simulated environment; see~\cite[Chapter 3]{VerdoneCOST12}. Thus, our proposed channel model extension for PLAs naturally contains and generalizes the conventional COST \nolinebreak[1]2100 model.
\end{remark}

Fig.~\ref{fig_num_cl} verifies Theorem~\ref{theo_N} empirically. In particular, the empirical cumulative density functions (ECDFs) of~$N_\op{alive}(x_1)$, $N_\op{new}(x_1,x_2)$, and~$N(x_1,x_2)$ for the measured LOS sites described in Sec.~\ref{sec_bsvr_data} are presented, along with Poisson fits. Empirical curves are based on 20 samples, five from each measured LOS site. A quick inspection reveals that these quantities are well approximated by Poisson-distributed random variables; the Chi-square goodness-of-fit test at significance level~$\alpha=0.05$ statistically confirms this visual impression. Empirical evidence from measured NLOS sites (not shown due to lack of space) also supports the hypothesis that the number of observed BS-VRs is well modeled by~\eqref{eq_N_new}, \eqref{eq_N_total}, and~\eqref{eq_N_alive}.
\begin{figure}[!t]
  \centering
  \resizebox{.65\textwidth}{!}{\centering\input{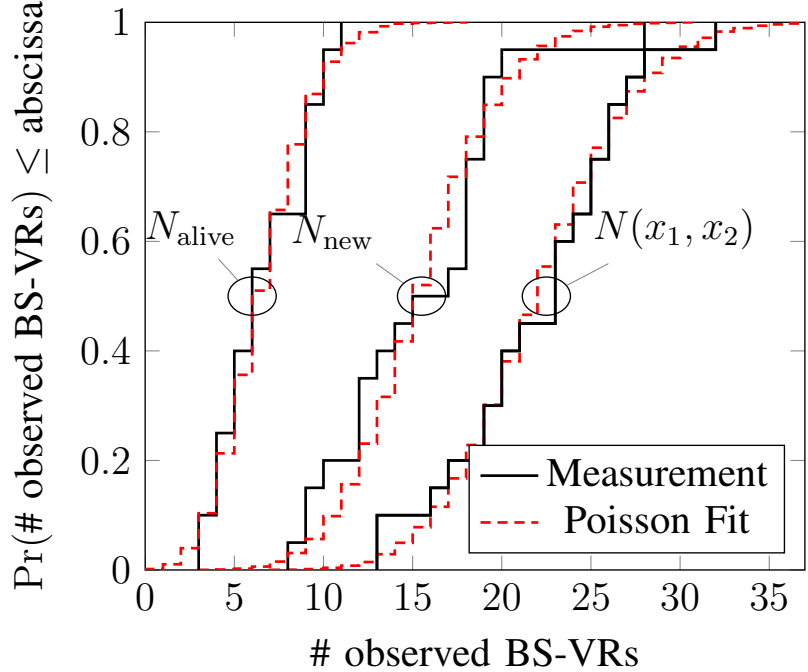}}
  \caption{ECDFs of~$N_\op{alive}(x_1)$, $N_\op{new}(x_1,x_2)$, and~$N(x_1,x_2)$ for measured LOS sites, and Poisson fits, with $x_2-x_1\approx 7.5$~m, $\Delta_0\approx 0.23$~m.}
  \label{fig_num_cl}
\end{figure}

\subsection{Maximum-Likelihood Estimation of BS-VR Lifetimes}\label{ml_est}

Let~$\mathcal{I}=\{1,\ldots,n\},$ and define
\begin{align}\label{eqsub:main}
  \mathcal{N}_{00} = &\left\{i\in\mathcal{I}:x_1<a_i,b_i<x_2\right\},\quad \mathcal{N}_{01} = \left\{i\in\mathcal{I}:x_1<a_i,b_i=x_2\right\},\nonumber\\
  \mathcal{N}_{10} = &\left\{i\in\mathcal{I}:x_1=a_i,b_i<x_2\right\},\quad \mathcal{N}_{11} = \left\{i\in\mathcal{I}:x_1=a_i,b_i=x_2\right\},
\end{align}
to be the subsets of indices whose BS-VRs are either observed completely, extend into $x>x_2$, into $x<x_1$, or into both $x<x_1,$ $x>x_2$, respectively. Further, let $n_{00}$, $n_{01}$, $n_{10}$, $n_{11}$ denote the number of elements in $\mathcal{N}_{00}$, $\mathcal{N}_{01}$, $\mathcal{N}_{10},$ $\mathcal{N}_{11}$ so that
\begin{align}\label{eq_conservation_n}
  n = n_{00} + n_{01} + n_{10} + n_{11}
\end{align}
holds. It is shown in~\cite{Flordelis19} that, under the assumptions Sec.~\ref{sec_model}, the likelihood function that makes use of all the available information has the form
\begin{align}\label{eq_like}
  \mathcal{L}(\vec{\theta};\vec{x}) &= \frac{\lambda^n}{n!}\euler^{-\lambda(L+\E(Y_{>\Delta_0})-2\Delta_0)\int_{\Delta_0}^\infty f_Y(t;\vec{\theta}_Y)\,dt}\nonumber\\
  &\times \prod_{i\in\mathcal{N}_{00}} f_Y(\upsilon_i;\vec{\theta}_Y) \times \prod_{j\in\mathcal{N}_{01}\cup\mathcal{N}_{10}} \int_{\upsilon_j}^\infty f_Y(t;\vec{\theta}_Y)\,dt\nonumber\\
  &\times \left(\int_L^\infty (t-L)f_Y(t;\vec{\theta}_Y)\,dt\right)^{n_{11}},
\end{align}
where $\vec{\theta} = [\lambda,\vec{\theta}_Y\rmt]\rmt$~is the $(p+1)$-dimensional vector of unknown deterministic parameters to be estimated, $\vec{\theta}_Y$ is the $p$-dimensional vector parametrizing the PDF $f_Y(y;\vec{\theta}_Y)$, $\vec{x} = [a_1,b_1,\ldots,b_n]^{\op{T}}$ is a $2n$-dimensional vector of data,~$\Delta_0\ge 0$ is the minimum feature size, and~$Y_{>\Delta_0}$ is the restriction of $Y$ to $\{\omega:Y(\omega)>\Delta_0\}$. The maximum likelihood estimator (MLE)~$\hat{\vec{\theta}}$ of $\vec{\theta}$ given~$\vec{x}$ is
\begin{align}\label{eq_mle}
  (\hat{\lambda},\hat{\vec{\theta}}_Y) &= \argmax_{\vec{\theta}} \mathcal{L}(\vec{\theta};\vec{x}).
\end{align}
\begin{figure}[t]
  \centering
  \resizebox{.65\textwidth}{!}{\centering\input{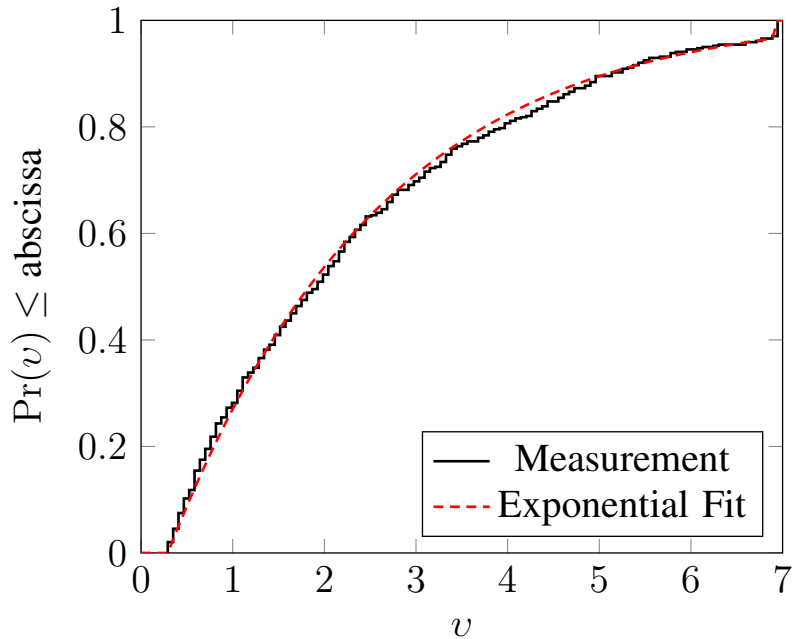}}
  \caption{ECDF of the observed lifetimes~$\upsilon_1\ldots,\upsilon_n$ of the BS-VRs for the measured LOS sites, and corresponding exponential fit based on~\eqref{eq_lbs_hat}.}
  \label{fig_exp_fit}
\end{figure}
For arbitrary PDFs $f_Y(y;\vec{\theta}_Y):\R^+\to\R^+$, closed-form solutions to~\eqref{eq_mle} are in general not available and, as a result, one has to resort to numerical methods. It turns out, however, that the true BS-VR lifetimes~$y_1,y_2,\ldots$ in our channel measurements can be well approximated by an exponential law~$\Exp(L_\op{BS})$, as indicated by Fig.~\ref{fig_exp_fit}.\footnote{Incidentally, an exponential law is assumed in~\cite{Wu:2014:MaMiModel,Lopez:2018:MaMiModel} to model the appearance and disappearance of clusters on both the array and time axes, but no empirical evidence is given to support this assumption.} That is, the random variable $Y$ has the distribution
\begin{align}\label{eq_exp_pdf}
  f_Y(y) =
  \begin{cases}
    \frac{1}{L_\op{BS}}\euler^{-\frac{y}{L_\op{BS}}},\quad&\text{for }y\ge 0;\\
    0,\quad&\text{otherwise,}
  \end{cases}
\end{align}
where $L_\op{BS}$ is the mean BS-VR lifetime. In this case, the MLE takes on a relatively simple form.

\begin{theorem}[MLE for PLAs]\label{theo_mle}Consider a birth-death process with intensity $\lambda>0$ obeying the as\-sump\-tions in Sec.\ref{sec_model}. Let $Y\in\Exp(L_\op{BS})$ be the distribution of the true BS-VR lifetimes.~Then
  \begin{enumerate}
  \item[(i)]the quantities~$n$,~$\nu=n_{11}-n_{00}$, and~$\Lambda_0=\sum_{i=1}^{n}(\upsilon_i-\Delta_0)$ are sufficient statistics;
  \item[(ii)]the MLEs of the unknown parameters $\lambda,$ $L_\op{BS}$ are given by
  \end{enumerate}
  \begin{align}
    \hat{\lambda} &= \frac{n}{L_0+\hat{L}_\op{BS}}\euler^{\frac{\Delta_0}{L_\op{BS}}},\label{eq_lambda_hat}\\
    \hat{L}_\op{BS} &= \frac{L_0\nu+\Lambda_0}{2(n-\nu)}\left(1+\sqrt{1 + \frac{4(n-\nu)L_0\Lambda_0}{(L_0\nu+\Lambda_0)^2}}\right),\label{eq_lbs_hat}
  \end{align}    
  where $L_0=L-\Delta_0$ is the array length shortened by the minimum feature size,~$\Delta_0.$
\end{theorem}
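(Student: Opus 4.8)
The plan is to specialize the general likelihood~\eqref{eq_like} to the exponential law~\eqref{eq_exp_pdf}, reduce it to a form whose data-dependence is confined to $n$, $\nu$, and $\Lambda_0$, and then maximize in two stages. First I would evaluate, for $Y\in\Exp(L_\op{BS})$, the four distribution-dependent ingredients of~\eqref{eq_like}: the tail probability $\int_{\Delta_0}^\infty f_Y(t)\,dt = \euler^{-\Delta_0/L_\op{BS}}$; the restricted mean $\E(Y_{>\Delta_0}) = \Delta_0 + L_\op{BS}$, which follows from memorylessness of the exponential; the tails $\int_{\upsilon_j}^\infty f_Y(t)\,dt = \euler^{-\upsilon_j/L_\op{BS}}$; and $\int_L^\infty (t-L)f_Y(t)\,dt = L_\op{BS}\,\euler^{-L/L_\op{BS}}$, again by memorylessness (or one line of integration by parts). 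Substituting these into~\eqref{eq_like}, and using that $\upsilon_i=b_i-a_i=L$ for every $i\in\mathcal{N}_{11}$ so that $\sum_{i=1}^n\upsilon_i = \Lambda_0 + n\Delta_0$, every exponential factor merges and the powers of $L_\op{BS}$ collapse to the single exponent $n_{11}-n_{00}=\nu$, leaving
\[
  \mathcal{L}(\lambda,L_\op{BS};\vec{x}) = \frac{\lambda^n}{n!}\,L_\op{BS}^{\nu}\,\exp\!\Big(-\lambda\,(L_0+L_\op{BS})\,\euler^{-\Delta_0/L_\op{BS}} - \tfrac{\Lambda_0+n\Delta_0}{L_\op{BS}}\Big).
\]
Part~(i) is then immediate from the Fisher--Neyman factorization theorem: the right-hand side depends on the data $\vec{x}$ only through $(n,\nu,\Lambda_0)$, with the residual factor $1/n!$ being a function of the component $n$ alone.

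For part~(ii) I would use profile likelihood. The log-likelihood is strictly concave in $\lambda$ (its second $\lambda$-derivative is $-n/\lambda^2<0$), so for each fixed $L_\op{BS}$ the inner maximizer is $\hat\lambda(L_\op{BS}) = \frac{n}{L_0+L_\op{BS}}\,\euler^{\Delta_0/L_\op{BS}}$, i.e.~\eqref{eq_lambda_hat}. On feeding this back, the term $\lambda(L_0+L_\op{BS})\euler^{-\Delta_0/L_\op{BS}}$ evaluates to $n$, the two occurrences of $n\Delta_0/L_\op{BS}$ cancel, and the profile log-likelihood reduces, up to an additive constant, to
\[
  \ell_p(L_\op{BS}) = -n\log(L_0+L_\op{BS}) + \nu\log L_\op{BS} - \frac{\Lambda_0}{L_\op{BS}}.
\]
Clearing the positive factor $L_\op{BS}^2(L_0+L_\op{BS})$ turns $\ell_p'(L_\op{BS})=0$ into the quadratic $(n-\nu)L_\op{BS}^2 - (\nu L_0+\Lambda_0)L_\op{BS} - \Lambda_0 L_0 = 0$, whose constant term is negative, so its two roots have opposite signs; the unique positive root is, by the quadratic formula and a trivial rearrangement, exactly~\eqref{eq_lbs_hat}. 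The sign of the quadratic (a downward parabola, positive at the origin) further shows $\ell_p'>0$ to the left of this root and $\ell_p'<0$ to the right, so it is the global maximizer on $(0,\infty)$; combined with $\hat\lambda(\cdot)$ this yields the joint MLE.

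The integral evaluations and the quadratic-formula algebra are routine; the places that need care are (a) the bookkeeping that collapses all the $L_\op{BS}$-powers and $\euler^{\pm\Delta_0/L_\op{BS}}$ factors into the compact likelihood above --- in particular recognizing $\sum_i\upsilon_i = \Lambda_0+n\Delta_0$ and the cancellation of $n\Delta_0/L_\op{BS}$ in $\ell_p$ --- and (b) establishing that the positive root is the \emph{global} maximizer and that the profiling step really does deliver the joint maximizer. I expect (a) to be the main time sink, though conceptually straightforward; (b) requires stating the unimodality argument properly rather than merely invoking concavity, and one should also flag the mild genericity assumptions $n-\nu = 2n_{00}+n_{01}+n_{10}>0$ and $\Lambda_0>0$, failing which the optimum degenerates to a boundary.
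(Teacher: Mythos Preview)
Your proposal is correct and follows essentially the same route as the paper: specialize~\eqref{eq_like} to the exponential law to obtain the compact likelihood~\eqref{eq_like_exp}, invoke Neyman--Fisher for~(i), and reduce the stationarity conditions to the same quadratic in $L_{\op{BS}}$ for~(ii). The only organizational differences are that the paper first uses Lemma~\ref{lemma_equiv} to reduce to $\Delta_0=0$ (whereas you carry $\Delta_0$ throughout and observe the $n\Delta_0/L_{\op{BS}}$ cancellation directly), and that the paper sets both partial derivatives to zero simultaneously rather than profiling out $\lambda$ first; your unimodality argument for global optimality is in fact a little more careful than the paper's one-line appeal to $a\le 0$. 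One small slip: in the sign convention you adopt, the quadratic $(n-\nu)L_{\op{BS}}^2-(\nu L_0+\Lambda_0)L_{\op{BS}}-\Lambda_0 L_0$ is an \emph{upward} parabola \emph{negative} at the origin, so it is its negative (equivalently, $\ell_p'$ itself) that is positive near $0$ and negative for large $L_{\op{BS}}$; the conclusion is unchanged.
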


We will need the following result.
\begin{lemma}\label{lemma_equiv}
  Let the assumptions in Theorem~\ref{theo_mle} hold, let $\Delta_0\ge 0,$ and define $\lambda_0=\lambda\euler^{-\frac{\Delta_0}{L_\op{BS}}}.$ Then, two PLAs given by the 4-tuples $(L,L_\op{BS},\lambda,\Delta_0)$ and~$(L_0,L_\op{BS},\lambda_0,0)$ have the same likelihood function, provided that observations $\upsilon_1,\ldots,\upsilon_n$ in the former one are shortened by~$\Delta_0.$
\end{lemma}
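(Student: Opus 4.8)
The plan is to prove Lemma~\ref{lemma_equiv} by direct inspection of the likelihood~\eqref{eq_like}, tracking how each factor transforms under the substitution $(L,L_\op{BS},\lambda,\Delta_0)\mapsto(L_0,L_\op{BS},\lambda_0,0)$ together with the shift $\upsilon_i\mapsto\upsilon_i-\Delta_0$. First I would note that since $Y\in\Exp(L_\op{BS})$, the truncated density $f_Y(t;\vec{\theta}_Y)\mathbf{1}_{\{t>\Delta_0\}}/\int_{\Delta_0}^\infty f_Y(s)\,ds$ is, by the memorylessness of the exponential, exactly the density of $\Exp(L_\op{BS})$ shifted by $\Delta_0$; equivalently $Y_{>\Delta_0}-\Delta_0\in\Exp(L_\op{BS})$, so $\E(Y_{>\Delta_0})-\Delta_0=L_\op{BS}$ and the exponent's prefactor $L+\E(Y_{>\Delta_0})-2\Delta_0$ collapses to $L_0+L_\op{BS}$, which is precisely the corresponding quantity for the second PLA with zero minimum feature size (where $\E(Y_{>0})=\E(Y)=L_\op{BS}$). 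The remaining integral $\int_{\Delta_0}^\infty f_Y(t)\,dt=\euler^{-\Delta_0/L_\op{BS}}$ then combines with $\lambda^n$ to yield $\lambda_0^n$ times the matching exponential factor $\euler^{-\lambda_0(L_0+L_\op{BS})}$.

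Next I would handle the three product terms. For $i\in\mathcal{N}_{00}$, change variables $t=s+\Delta_0$ inside each factor; the exponential density satisfies $f_Y(\upsilon_i)=\frac{1}{L_\op{BS}}\euler^{-\upsilon_i/L_\op{BS}}=\euler^{-\Delta_0/L_\op{BS}}\cdot\frac{1}{L_\op{BS}}\euler^{-(\upsilon_i-\Delta_0)/L_\op{BS}}$, so each produces a factor $\euler^{-\Delta_0/L_\op{BS}}$ times the density evaluated at the shortened observation $\upsilon_i-\Delta_0$; there are $n_{00}$ such factors. Similarly, for $j\in\mathcal{N}_{01}\cup\mathcal{N}_{10}$ the survival integral scales as $\int_{\upsilon_j}^\infty f_Y(t)\,dt=\euler^{-\upsilon_j/L_\op{BS}}=\euler^{-\Delta_0/L_\op{BS}}\int_{\upsilon_j-\Delta_0}^\infty f_Y(t)\,dt$, contributing $\euler^{-\Delta_0/L_\op{BS}}$ per factor, $n_{01}+n_{10}$ in total. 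For the $\mathcal{N}_{11}$ term, $\int_L^\infty(t-L)f_Y(t)\,dt$ with the exponential density evaluates (after the same shift, using $\int_L^\infty(t-L)\frac{1}{L_\op{BS}}\euler^{-t/L_\op{BS}}\,dt=L_\op{BS}\euler^{-L/L_\op{BS}}$) so that $\left(\int_L^\infty(t-L)f_Y(t)\,dt\right)^{n_{11}}=\euler^{-n_{11}\Delta_0/L_\op{BS}}\left(\int_{L_0}^\infty(t-L_0)f_Y(t)\,dt\right)^{n_{11}}$.

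Finally I would collect the accumulated exponential factors. Across all product terms the total is $\euler^{-(n_{00}+n_{01}+n_{10}+n_{11})\Delta_0/L_\op{BS}}=\euler^{-n\Delta_0/L_\op{BS}}$ by~\eqref{eq_conservation_n}, which is exactly what is needed to convert the leading $\lambda^n$ of the first PLA into $\lambda_0^n=\lambda^n\euler^{-n\Delta_0/L_\op{BS}}$ for the second. Combining this with the earlier observation about the exponent, every factor of $\mathcal{L}(\vec\theta;\vec x)$ for $(L,L_\op{BS},\lambda,\Delta_0)$ with data $\upsilon_i$ matches the corresponding factor of $\mathcal{L}$ for $(L_0,L_\op{BS},\lambda_0,0)$ with data $\upsilon_i-\Delta_0$, and the two likelihoods coincide. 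The main obstacle I anticipate is purely bookkeeping: making sure the $\euler^{-\Delta_0/L_\op{BS}}$ factors are counted with exactly the right multiplicities in each of the four index classes and that they total $n$ rather than, say, $n-n_{11}$; the memoryless-shift identity itself is immediate, but a clean statement requires care that the $\Delta_0$ shift in the $\mathcal{N}_{11}$ integral lands correctly on both the lower limit and the $(t-L)$ factor simultaneously.
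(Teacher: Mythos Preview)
Your proposal is correct and follows essentially the same route as the paper: both arguments plug the exponential density into the general likelihood~\eqref{eq_like} and verify invariance under the substitution $(L,\lambda,\Delta_0,\upsilon_i)\mapsto(L_0,\lambda_0,0,\upsilon_i-\Delta_0)$ by direct computation. The paper's version is slightly more economical in that it first collapses~\eqref{eq_like} to the compact form~\eqref{eq_like_exp} and then reads off the invariance in one line, whereas you track each of the four factor types separately; note also that your first paragraph's phrasing is a little off (the single survival integral in the exponent converts the $\lambda$ \emph{inside the exponent} to $\lambda_0$, not $\lambda^n$ to $\lambda_0^n$---that conversion comes, as you correctly say in your third paragraph, from the $n$ accumulated factors $\euler^{-\Delta_0/L_\op{BS}}$ across the product terms).
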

\begin{proof}
  Assume~$(L,L_\op{BS},\lambda,\Delta_0)$, apply~\eqref{eq_exp_pdf} to~\eqref{eq_like} to obtain
  \begin{align}\label{eq_like_exp}
    \mathcal{L}(\vec{\theta};\vec{x}) \propto \lambda^n \euler^{-\lambda(L-\Delta_0+L_\op{BS})\euler^{-\frac{\Delta_0}{L_\op{BS}}}} L_\op{BS}^\nu \euler^{-\sum_{i=1}^n\frac{\upsilon_i}{L_\op{BS}}},
  \end{align}
  which is unchanged under the mapping $L\mapsto L_0,$ $\lambda\mapsto\lambda_0,$ $\Delta_0\mapsto 0$ $\upsilon_i\mapsto\upsilon_i-\Delta_0.$
\end{proof}

We are now ready to prove Theorem~\ref{theo_mle}.
\begin{proof}
  By Lemma~\ref{lemma_equiv} we can assume without loss of generality that~$\Delta_0=0.$ Claim (i) follows directly from an application of the Neyman-Fisher factorization theorem. We now prove (ii). Taking the partial derivatives of the logarithm of the likelihood function~\eqref{eq_like_exp}, we have
  \begin{align*}
    \frac{\partial\log{\mathcal{L}(\vec{\theta};\vec{x})}}{\partial\lambda} &= \phantom{-}\frac{n}{\lambda} - (L+L_\op{BS}),\\
    \frac{\partial\log{\mathcal{L}(\vec{\theta};\vec{x})}}{\partial L_\op{BS}} &= -\lambda + \frac{\sum_{i=1}^n\upsilon_i}{L_\op{BS}^2} + \frac{\nu}{L_\op{BS}},
  \end{align*}
  and setting them equal to zero produces
  \begin{subequations}\label{eqsub3:main}
    \begin{align}
      0&=\phantom{-}\frac{n}{\lambda} - (L + L_\op{BS}),\label{eqsub3:eq_1}\\
      0&=-\lambda + \frac{\nu}{L_\op{BS}} + \frac{\sum_{i=1}^n\upsilon_i}{L_\op{BS}^2}.\label{eqsub3:eq_2}
    \end{align}
  \end{subequations}
  Solving~\eqref{eqsub3:eq_1} for~$\lambda$ and inserting into~\eqref{eqsub3:eq_2} produces
  \begin{align*}
    -\frac{n}{L+L_\op{BS}} + \frac{\nu}{L_\op{BS}} + \frac{\sum_{i=1}^n\upsilon_i}{L_\op{BS}^2} = 0,
  \end{align*}
  which, by grouping terms of the same degree, can be rewritten as $$aL_\op{BS}^2 + bL_\op{BS} + c = 0$$ with $a = \nu-n,$ $b = L\nu + \sum_{i=1}^n\upsilon_i,$ and $c = L(\sum_{i=1}^n\upsilon_i).$ Solving for~$L_\op{BS}$ yields the two solutions $$\hat{L}_\op{BS} = \frac{-b\pm\sqrt{b^2-4ac}}{2a}.$$ Clearly, we have $c\ge 0$ and, by~\eqref{eq_conservation_n}, we see that $a\leq 0$. It follows that $b^2-4ac\ge b^2,$ and so only the solution $$\hat{L}_\op{BS} = \frac{-b+\sqrt{b^2-4ac}}{2a}$$ pertains to the permissible range $\hat{L}_\op{BS}\ge 0$ of the parameter~$L_\op{BS}.$ Since~$a\leq 0$, it is clear that $\hat{L}_\op{BS}$ indeed maximizes the log-likelihood function. Finally, we see from~\eqref{eqsub3:eq_1} that the MLE of the parameter~$\lambda$ must be~\eqref{eq_lambda_hat}.
\end{proof}

\begin{remark}
  Note in Theorem~\ref{theo_mle} that if $n_{00}=n_{11}+\frac{\Lambda_0}{L_0}$ holds (and so $n_{11},\frac{\Lambda_0}{L_0}\in\N^+$), then \eqref{eq_lbs_hat} reduces to~$\hat{L}_\op{BS}=(\frac{L_0\Lambda_0}{n-\nu})^{\frac{1}{2}}$. On the other hand, if $n_{00}=n_{01}=n_{10}=0$, then $\hat{L}_\op{BS}=\infty.$
\end{remark}

It is a well-known result~\cite[Chapter~7]{Kay:1993:estimation} that under certain regularity assumptions on the likelihood function, the MLE is asymptotically unbiased (and thus $\E(\hat{\vec{\theta}})\to\vec{\theta}$ as $n\to\infty$) and efficient (and thus $\Var(\hat{\vec{\theta}})\to\text{CRLB}$ as $n\to\infty$ as well, where CRLB means the Cramer-Rao lower bound of the PDF~$f_{\vec{\upsilon}}(\vec{\upsilon};\vec{\theta}).$) It is therefore of interest to obtain the CRLB for the estimates of the intensity~$\lambda$ and lifetimes~$L_\op{BS}.$

\begin{theorem}[Cramer-Rao lower bound]\label{theo_crlb}Consider a birth-death process with intensity $\lambda>0$ satisfying the assumptions in Sec.\ref{sec_model}, and let $Y\in\Exp(L_\op{BS})$ be the distribution of the true lifetimes. Then, the Fisher information matrix (FIM) is
  \begin{align}
    \vec{I}(\vec{\theta}) = \lambda_0\begin{bmatrix}(L_\op{BS}+L_0)/\lambda^2&1\\1&(L_\op{BS}+L_0)/{L_\op{BS}^2}\end{bmatrix}.
  \end{align}
  The CRLB is found as
  \begin{align}
    \crlb_{\lambda} = [\vec{I}(\vec{\theta})\rmi]_{11},\quad\crlb_{L_\op{BS}} = [\vec{I}(\vec{\theta})\rmi]_{22}.\label{crlbs}
  \end{align}
\end{theorem}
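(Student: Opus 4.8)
The plan is to read off $\vec I(\vec\theta)$ from its definition $\vec I(\vec\theta)=-\E\bigl[\nabla^2_{\vec\theta}\log\mathcal L(\vec\theta;\vec x)\bigr]$, exploiting that in the exponential case the log-likelihood is affine in a short list of statistics whose means are within reach of Theorem~\ref{theo_N} and its proof; the CRLB assertion \eqref{crlbs} is then merely the adjugate formula for a $2\times2$ inverse, so essentially all the content sits in the FIM. Concretely, I would first invoke Lemma~\ref{lemma_equiv} to pass to the equivalent array with $\Delta_0=0$, array length $L_0$, and birth rate $\lambda_0$, for which \eqref{eq_like_exp} collapses (up to an additive constant free of $\vec\theta$) to
\[
\log\mathcal L = n\log\lambda_0-\lambda_0(L_0+L_{\op{BS}})+\nu\log L_{\op{BS}}-\frac{1}{L_{\op{BS}}}\sum_{i=1}^{n}\upsilon_i,\qquad \nu=n_{11}-n_{00}.
\]
Differentiating twice in $(\lambda_0,L_{\op{BS}})$ gives $\partial^2_{\lambda_0}\log\mathcal L=-n/\lambda_0^2$, $\partial_{\lambda_0}\partial_{L_{\op{BS}}}\log\mathcal L=-1$, and $\partial^2_{L_{\op{BS}}}\log\mathcal L=-\nu/L_{\op{BS}}^2-2\bigl(\sum_i\upsilon_i\bigr)/L_{\op{BS}}^3$, so every Hessian entry is affine in $n$, $\nu$, $\sum_i\upsilon_i$; the triangular chain rule induced by $\lambda_0=\lambda\euler^{-\Delta_0/L_{\op{BS}}}$ then re-expresses the result in the coordinates $(\lambda,L_{\op{BS}})$ appearing in the statement.

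It remains to compute the three expectations for the reduced array. I would get $\E[n]=\lambda_0(L_0+L_{\op{BS}})$ directly from Theorem~\ref{theo_N} (with $x_2-x_1=L_0$, $\E(Y)=L_{\op{BS}}$). For $\E[\nu]=\E[n_{11}]-\E[n_{00}]$, I would integrate the survival/extinction probabilities of $Y\in\Exp(L_{\op{BS}})$ against the stationary birth intensity $\lambda_0$ over the relevant ranges of birth positions --- exactly as $N_\op{alive}$ is handled in the proof of Theorem~\ref{theo_N} --- obtaining $\E[n_{11}]=\lambda_0 L_{\op{BS}}\euler^{-L_0/L_{\op{BS}}}$ and $\E[n_{00}]=\lambda_0\bigl(L_0-L_{\op{BS}}(1-\euler^{-L_0/L_{\op{BS}}})\bigr)$, hence $\E[\nu]=\lambda_0(L_{\op{BS}}-L_0)$. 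For $\E\bigl[\sum_i\upsilon_i\bigr]$ a Campbell / renewal--reward argument identifies the expected total visible length with (array length)$\times$(mean coverage depth)$=L_0\cdot\lambda_0 L_{\op{BS}}$. Substituting these into the negated Hessian produces $\vec I(\vec\theta)$, and inverting the $2\times2$ matrix gives $\crlb_\lambda=[\vec I(\vec\theta)\rmi]_{11}$ and $\crlb_{L_{\op{BS}}}=[\vec I(\vec\theta)\rmi]_{22}$.

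The hard part is the second step, and within it the moment $\E\bigl[\sum_i\upsilon_i\bigr]$ of the \emph{clipped} observed lifetimes (and, to a lesser degree, $\E[\nu]$): the boundary censoring encoded by $\mathcal N_{01},\mathcal N_{10},\mathcal N_{11}$ must be accounted for, which I expect to handle most cleanly by viewing the observed VRs as a marked Poisson process in equilibrium and applying Campbell's formula, the memorylessness of $\Exp(L_{\op{BS}})$ absorbing the remaining algebra. A secondary technical point is verifying the regularity conditions --- smoothness of $\log\mathcal L$ in $\vec\theta$ and the legitimacy of interchanging $\nabla_{\vec\theta}$ with the point-process expectation --- so that the negative-Hessian form of $\vec I(\vec\theta)$ is valid and coincides with the score-covariance form.
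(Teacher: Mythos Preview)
Your proposal is correct and follows essentially the same route as the paper: compute the Hessian of the exponential log-likelihood~\eqref{eq_like_exp}, then take expectations using precisely the three moments $\E(n)=\lambda_0(L_{\op{BS}}+L_0)$, $\E(\nu)=\lambda_0(L_{\op{BS}}-L_0)$, $\E(\Lambda_0)=\lambda_0 L_{\op{BS}}L_0$ that the paper simply asserts (and which you sketch via thinning and Campbell arguments). The only stylistic difference is that you pass through Lemma~\ref{lemma_equiv} and differentiate in $(\lambda_0,L_{\op{BS}})$ before chain-ruling back to $(\lambda,L_{\op{BS}})$, whereas the paper works directly in $(\lambda,L_{\op{BS}})$; either way the computation is, as the paper says, routine once the three moments are in hand.
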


The proof is a routine exercise in view of the equalities $\E(n)=\lambda_0(L_\op{BS}+L_0),$ $\E(n_{11}-n_{00})=\lambda_0(L_\op{BS}-L_0),$ $\E(\Lambda_0)=\lambda_0 L_\op{BS} L_0,$ and is thus omitted.\footnote{Even more is true. The equalities $\E(n)=\lambda(L_\op{BS}+L),$ $\E(n_{11}-n_{00})=\lambda(L_\op{BS}-L),$ $\E(\Lambda)=\lambda L_\op{BS} L$ hold for \emph{arbitrary} $f_Y:\R^+\to\R^+,$ and $\Delta_0=0.$} Numerous examples can be found in,~e.g.,~\cite{Kay:1993:estimation}. Note that the expectation is with respect to~$f_{\vec{x}}(\vec{x};\vec{\theta}).$

The well-known information inequality~\cite{Kay:1993:estimation} asserts that $\E((\hat{\vec{\theta}}-\vec{\theta})(\hat{\vec{\theta}}-\vec{\theta})\rmt)\succeq \vec{I}(\vec{\theta})\rmi$ for~$\hat{\vec{\theta}}$ unbiased, from which it follows that
\begin{align}  
  \Var\left(\frac{\hat{\lambda}}{\lambda}\right) \ge \frac{\crlb_\lambda}{\lambda^2} = \left(\frac{1}{\lambda_0 L_0}\right)\frac{1 + L_\op{BS}/L_0}{1 + 2L_\op{BS}/L_0},\label{eq_crlb_lambda}
\end{align}
and
\begin{align}
  \Var\left(\frac{\hat{L}_\op{BS}}{L_\op{BS}}\right) \ge \frac{\crlb_{L_\op{BS}}}{L_\op{BS}^2} = \left(\frac{1}{\lambda_0 L_0}\right)\frac{1 + L_\op{BS}/L_0}{1 + 2L_\op{BS}/L_0}.\label{eq_crlb_lbs}
\end{align}  
For fixed birth rate~$\lambda>0$, the normalized Cramer-Rao bounds satisfy
\begin{align}
  \frac{0.5}{\lambda_0L_0} \leq \frac{\crlb_{\lambda}}{\lambda^2} = \frac{\crlb_{L_\op{BS}}}{L_\op{BS}^2} \leq \frac{1}{\lambda_0L_0},\label{eq_crlb_tmp}
\end{align}
and so the influence of the parameter~$L_\op{BS}$ is limited. In fact,~\eqref{eq_crlb_tmp} reveals that the normalized CRLBs are dominated by the inverse of the intensity-aperture product~$\lambda_0L_0.$ Increasing the length~$L$ of the array can therefore reduce the variance of~$\hat{\vec{\theta}}$ significantly. Likewise, better performance of~$\hat{\vec{\theta}}$ should be expected in environments with many scatterers (clusters) such that~$\lambda$ is large. We also note from~\eqref{eq_crlb_tmp} that the effect of the minimum feature size~$\Delta_0$ on the CRLBs can be mostly neglected so long as $\frac{\Delta_0}{L_\op{BS}}\ll 1,$ $\frac{\Delta_0}{L}\ll 1$ hold.

\begin{figure*}[!t]
  \psfrag{LeffbyLdB}[][]{\scriptsize{$10\log_{10}(L_\op{BS}/L_0)$}}
  \psfrag{stdlambda}[][]{\scriptsize{$\RMSE(\hat{\lambda})/\lambda$}}
  \psfrag{stdleff}[][]{\scriptsize{$\RMSE(\hat{L}_\op{BS})/L_\op{BS}$}}
  \psfrag{averageratio}[][]{\scriptsize{Average~$n_{xy}$ / Average~$n$}}
  \psfrag{lambdaLeq2520151005}[][]{\tiny{$\lambda_0L_0=5,10,15,20,25$}}
  \psfrag{MMM1}[][]{\tiny{$\sqrt{\crlb}$}}
  \psfrag{MMM2}[][]{\tiny{MLE}}
  \psfrag{MMM3}[][]{\tiny{MoME}}
  \psfrag{n00}[][]{\scriptsize{$n_{00}$}}
  \psfrag{n01}[][]{\scriptsize{$n_{01}$}}
  \psfrag{n10}[][]{\scriptsize{$n_{10}$}}
  \psfrag{n11}[][]{\scriptsize{$n_{11}$}}
  \centering
  \includegraphics[trim={0 0 0 .1cm},clip,width=\textwidth]{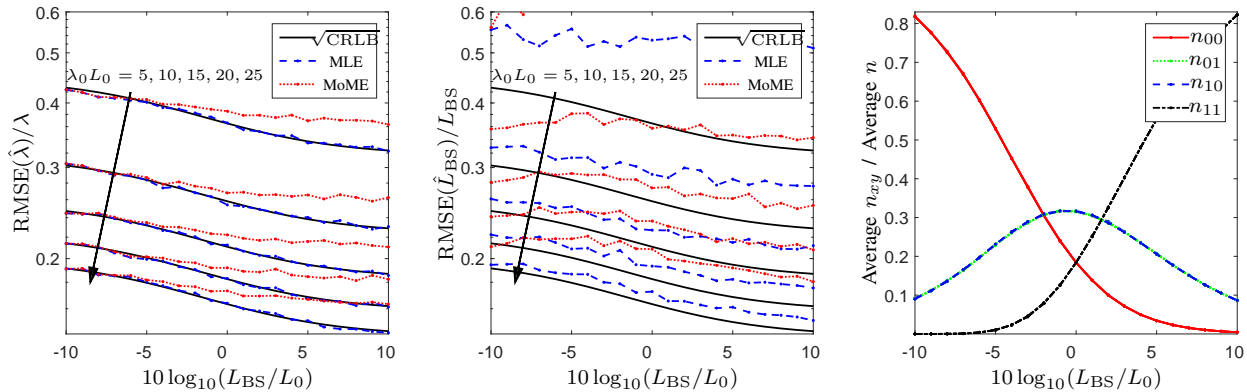}
  \caption{Plots of the RMSE and proportions of the various BS-VR types. For each point, $10,000$ experiments have been randomly generated. (Left) The MLE, MoME, and CRLB of~$\lambda.$ (Middle) Idem for $L_\op{BS}$. (Right) Radii of the average of~$n_{00},$ $n_{01},$ $n_{10},$ and~$n_{11},$ to the average of~$n.$ (Note that the plots of~$n_{01}$ and~$n_{10}$ overlap.)}
  \label{fig_mle_std}
\end{figure*}
To evaluate the extent to which the estimators~$\hat{\lambda},$ $\hat{L}_\op{BS}$ given by~\eqref{eq_lambda_hat},~\eqref{eq_lbs_hat} approach the CRLBs~\eqref{eq_crlb_lambda},~\eqref{eq_crlb_lbs} we run Monte Carlo simulations. Fig.~\ref{fig_mle_std} shows the root-mean-square error (RMSE) relative to the magnitude of the estimated parameter. As baseline for comparison we use the method-of-moments estimator (MoME)
\begin{align}
  \hat{L}_\op{BS} = \frac{T}{1-T/L_0},\quad \hat{\lambda} = \frac{n}{L_0+\hat{L}_\op{BS}},\quad T=\frac{\Lambda_0}{n}.\label{eq_lbs_mome}
\end{align}
The second equation in~\eqref{eq_lbs_mome} follows from the identity~$\E(\upsilon)=\frac{L_\op{BS}L_0}{L_\op{BS} + L_0},$ and by approximating~$\E(\upsilon)$ by $T=\frac{\Lambda_0}{n}.$ MoM-based estimators are sometimes used in the literature~\cite{Zhu:2013:COST2100}.\footnote{The derivation of~\eqref{eq_lbs_mome} uses (i) $Y\in\Exp(L_\op{BS})$, and (ii) $\E(\upsilon)=\frac{L_\op{BS}L_0}{L_\op{BS}+L_0},$ none of which is obvious. Less sophisticated estimators are possible. One can, for instance, assume $Y$ constant (and thus knowledge of type (i) is not needed), or ignore the fact that observed BS-VR lifetimes may be truncated (and thus knowledge of type (ii) is not needed). Such simpler estimators abound in the literature, but their performance is generally poor (neglecting (i) leads to data-model mismatch, and (ii), to biased estimators). In this work we ignore these simpler estimators and focus on the MoME and, especially, the MLE.} According to the simulations, the MLE performs better than the MoME. As the intensity-aperture product~$\lambda_0L_0$ increases, the MoME lags behind the MLE, which steadily approaches the CRLB and becomes efficient.\footnote{We have numerically verified that both the MLE and the MoME are asymptotically unbiased as $\lambda_0L_0\to\infty.$} As can be seen, estimating~$\lambda$ is easier than estimating~$L_\op{BS}$, where ``easier'' means here that the RMSE approaches $\sqrt{CRLB}$ more rapidly. For instance, the MLE of~$\lambda$ can be considered efficient as early as $\lambda_0L_0=5$, while that of~$L_\op{BS}$ requires $\lambda_0L_0=15$ or larger to be considered within range of the CLRB. A somewhat surprising observation is that the performance of the MLE improves as the ratio $\frac{L_\op{BS}}{L_0}$ increases. (By contrast, the performance of the MoME appears to be rather insensitive to the value of~$L_\op{BS}.$) This can be explained by the fact that as~$L_\op{BS}$ grows larger for fixed~$\lambda_0L_0$, more BS-VR lifetimes are (partially) observed by the PLA, contributing to improve estimation accuracy. Another interesting remark is that the amount of fully observed BS-VR lifetimes is only a fraction of the total; thus, developing estimators that can reliably cope with truncated observations appears to be of great importance.

Applying Theorems~\ref{theo_mle} and~\ref{theo_crlb} to the data from the LOS sites, we obtain the estimates
\begin{align}
  \hat{\lambda} = 2.6\pm 12\%,\quad \hat{L}_\op{BS} = 2.9\pm 13\%.
\end{align}
Compared to Fig.~\ref{fig_mle_std}, the relative RMSEs have been divided by $\sqrt{4}.$ This is because each of the four measurement sites is assumed to produce uncorrelated observations. In view of the above, when designing measurement campaigns with PLAs one should perhaps ensure
\begin{align}
  \lambda_0L_0>15,\quad\text{and}\quad\lambda_0L_0N_0>100
\end{align}
or so, where $N_0$ is here the number of \emph{uncorrelated} experiments. In the next section we discuss when two experiments can be considered uncorrelated.

\subsection{Autocorrelation Function of Observed BS-VRs}Of considerable importance to characterize an stochastic process is the autocorrelation function (ACF) of the process.
\begin{theorem}[Autocorrelation function]\label{theo_cov}Let~$N(x,\vec{r}),$ with $(x,\vec{r})\in \R^+\times\R^2,$ denote the number of far clusters visible from BS array location~$x$ and MS location~$\vec{r},$ and define the covariance function
  \begin{align*}
    C(\Delta x,\Delta\vec{r}) = \E\left\{(N(x+\Delta x,\vec{r}+\Delta\vec{r})-m)(N(x,\vec{r})-m)\right\},
  \end{align*}
  where $m=\E(N(x,\vec{r}))$ is assumed constant. Then, the ACF~$R_N(\Delta x,\Delta\vec{r}) = C(\Delta x,\Delta\vec{r})/C(\vec{0})$ of the number of far clusters in the COST \nolinebreak[1]2100 model is given by
  \begin{align}\label{eq_acf}
    R_N(\Delta x,\Delta\vec{r}) = R_N(\Delta x)\cdot R_N(\Delta\vec{r}),
  \end{align}
  where~$R_N(\Delta x) = \euler^{-\frac{|\Delta x|}{L_\op{BS}}}$ is the BS-side ACF, and
  \begin{align*}
    &R_N(\Delta\vec{r})=
    \begin{cases}
      \frac{1}{\pi}(2\chi - \sin(2\chi)),\quad&\text{for }0\leq|\Delta\vec{r}|\leq 2R_\op{C};\\
      0,\quad&\text{otherwise}
    \end{cases}
  \end{align*}
  is the MS-side ACF, where $\chi=\cos\rmi(\frac{|\Delta\vec{r}|/2}{R_\op{C}})$, and~$R_\op{C}\ge 0$ is the radius of the MS-VRs.
\end{theorem}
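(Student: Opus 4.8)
The plan is to represent all far clusters jointly as a single marked Poisson process and then exploit the fact that disjoint restrictions of a Poisson process are independent. First I would describe a far cluster by a triple $\omega=(a,y,\vec{c})$, where $a\in\R$ is the birth position of its BS-VR along the array axis, $y\ge 0$ the true length of that BS-VR, and $\vec{c}\in\R^2$ the center of its radius-$R_\op{C}$ MS-VR disk. By (A1)--(A2) of Sec.~\ref{sec_model}, together with the fact that in the COST~2100 model MS-VR centers are drawn from a translation-invariant law independently of the BS-VR mechanism, the clusters form a Poisson process on $\R\times\R^+\times\R^2$ with intensity $\lambda\varrho\,f_Y(y)\,da\,dy\,d\vec{c}$, where $\varrho>0$ is the spatial intensity of MS-VR centers. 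A cluster is visible at $(x,\vec{r})$ iff $a\le x\le a+y$ and $\abs{\vec{c}-\vec{r}}\le R_\op{C}$, so $N(x,\vec{r})$ counts the points of this process in $B_{x,\vec{r}}=\{(a,y,\vec{c}):a\le x\le a+y,\ \abs{\vec{c}-\vec{r}}\le R_\op{C}\}$; by the restriction property, $N(x,\vec{r})\in\Po(\mu(B_{x,\vec{r}}))$ with $\mu(B_{x,\vec{r}})=\lambda\,\E(Y)\cdot\varrho\pi R_\op{C}^2=:m$ (ignoring the MS-VR condition, the one-sided BS marginal is the $\Po(\lambda\,\E(Y))$ of Theorem~\ref{theo_N}).

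Next I would compute the covariance through the ``visible at both locations'' decomposition. Writing $N_1=N(x,\vec{r})$ and $N_2=N(x+\Delta x,\vec{r}+\Delta\vec{r})$, the three sets $B_{x,\vec{r}}\cap B_{x+\Delta x,\vec{r}+\Delta\vec{r}}$, $B_{x,\vec{r}}\setminus B_{x+\Delta x,\vec{r}+\Delta\vec{r}}$ and $B_{x+\Delta x,\vec{r}+\Delta\vec{r}}\setminus B_{x,\vec{r}}$ are pairwise disjoint, so their point counts $A$, $B_1$, $B_2$ are independent Poisson variables with $N_1=A+B_1$ and $N_2=A+B_2$. Hence $C(\Delta x,\Delta\vec{r})=\Var(A)=\E(A)=\mu(B_{x,\vec{r}}\cap B_{x+\Delta x,\vec{r}+\Delta\vec{r}})$ and $C(\vec{0})=\Var(N_1)=m$. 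The decisive point is that $B_{x,\vec{r}}\cap B_{x+\Delta x,\vec{r}+\Delta\vec{r}}$ is the Cartesian product of an $(a,y)$-set --- the BS-VRs containing both $x$ and $x+\Delta x$ --- and a $\vec{c}$-set --- the MS-VR centers whose disk contains both $\vec{r}$ and $\vec{r}+\Delta\vec{r}$ --- so $\mu$ of it factorizes into a BS-side integral times an MS-side area.

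Evaluating the two factors is then mechanical. Taking $\Delta x\ge 0$ without loss of generality, $[a,a+y]\ni x,x+\Delta x$ iff $x+\Delta x-y\le a\le x$, an interval of length $(y-\Delta x)^+$ for fixed $y$; integrating against $f_Y$ and using Fubini gives $\lambda\int_{\abs{\Delta x}}^\infty(1-F_Y(u))\,du$, which equals $\lambda L_\op{BS}\euler^{-\abs{\Delta x}/L_\op{BS}}$ for $Y\in\Exp(L_\op{BS})$. For the MS side the admissible centers form the lens of two disks of radius $R_\op{C}$ whose centers are $\abs{\Delta\vec{r}}$ apart; a standard circular-segment computation with half-angle $\chi=\cos\rmi(\abs{\Delta\vec{r}}/(2R_\op{C}))$ --- so the common chord lies at distance $R_\op{C}\cos\chi=\abs{\Delta\vec{r}}/2$ from each center --- yields lens area $2\big(R_\op{C}^2\chi-\tfrac12 R_\op{C}^2\sin2\chi\big)=R_\op{C}^2(2\chi-\sin2\chi)$ when $\abs{\Delta\vec{r}}\le 2R_\op{C}$ and $0$ otherwise, so this factor is $\varrho R_\op{C}^2(2\chi-\sin2\chi)$. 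Dividing $C(\Delta x,\Delta\vec{r})=\lambda L_\op{BS}\euler^{-\abs{\Delta x}/L_\op{BS}}\cdot\varrho R_\op{C}^2(2\chi-\sin2\chi)$ by $C(\vec{0})=m=\lambda L_\op{BS}\varrho\pi R_\op{C}^2$ cancels $\lambda$, $L_\op{BS}$, $\varrho$ and $R_\op{C}^2$ and leaves exactly $\euler^{-\abs{\Delta x}/L_\op{BS}}\cdot\tfrac{1}{\pi}(2\chi-\sin2\chi)=R_N(\Delta x)\,R_N(\Delta\vec{r})$, with the stated vanishing for $\abs{\Delta\vec{r}}>2R_\op{C}$.

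I expect the main obstacle to be the rigor of the first step, not any computation: one must justify that the BS-VR intervals and the MS-VR disks may be treated as independent marks of a common Poisson process, and --- exactly as in the passage $x_1\to\infty$ used to prove Theorem~\ref{theo_N} --- implicitly work in the stationary regime so that the law of $N(x,\vec{r})$ is location-independent and the covariance above is well defined. Once that setup is granted, Poisson additivity, the exponential-tail integral and the lens-area identity are all elementary; a further easy remark is that for a general lifetime law the BS-side factor would instead be $\int_{\abs{\Delta x}}^\infty(1-F_Y(u))\,du\,/\,\E(Y)$, the exponential choice being what collapses it to the clean $\euler^{-\abs{\Delta x}/L_\op{BS}}$.
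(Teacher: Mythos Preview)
Your argument is correct and follows the same core idea as the paper: the covariance equals the Poisson mean of the ``visible at both locations'' count, and normalizing by $C(\vec{0})$ gives the stated product. The paper writes this as $R_N(\Delta x)=\Var(N_{11})/\Var(N(0))$ with $N_{11}\in\Po\big(\lambda\int_{\Delta x}^\infty(t-\Delta x)f_Y(t)\,dt\big)$, which is your $\E(A)$ after the integration-by-parts identity $\int_{\Delta x}^\infty(t-\Delta x)f_Y(t)\,dt=\int_{\Delta x}^\infty(1-F_Y(u))\,du$.

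Where you go a bit further than the paper is in the setup: the paper simply invokes a ``separability assumption'' to obtain the product form \eqref{eq_acf} and omits the MS-side computation, whereas you build the joint marked Poisson process on $\R\times\R^+\times\R^2$ and read off the factorization from the Cartesian-product structure of the intersection event, then carry out the lens-area calculation explicitly. That makes your version more self-contained and also yields the general-lifetime BS factor $\int_{|\Delta x|}^\infty(1-F_Y(u))\,du/\E(Y)$, which the paper states in the equivalent form $1-\E(\min(Y,\Delta x))/\E(Y)$.
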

\begin{proof}The factorization of the ACF into BS- and MS-side ACFs comes from a separability assumption. Let us first consider the BS-side ACF
  \begin{align}
    R_N(\Delta x) =\frac{\E[(N(x+\Delta x)-m)(N(x)-m)]}{\E[(N(x)-m)^2]}.
  \end{align}
  Since~$R_N(-\Delta x)=R_N(\Delta x)$, we need only to consider the case~$\Delta x>0$. Using the notation introduced in Sec.~\ref{ml_est} with~$x_1 = x$, and~$x_2=x + \Delta x$, we can write$$N(x) = N_{10} + N_{11},\quad N(x+\Delta x) = N_{01} + N_{11}.$$ A moment's reflection will convince the reader that $R_N(\Delta x)$ can only depend upon the number~$N_{11}$ of BS-VRs visible from both~$x$ and~$x+\Delta x$, that is
  \begin{align}
    R_N(\Delta x) = \frac{\Var(N_{11})}{\Var(N(0))}.
  \end{align}
  Using that~$N(0)\in\Po(\lambda\cdot\E(Y))$, and~$N_{11}\in\Po(\lambda\cdot\int_{\Delta x}^\infty(t-\Delta x)f_Y(t)\,dt)$, and after some manipulations, we obtain $R_N(\Delta x) = 1 - \frac{\int_0^\infty\min(t,\Delta x)f_Y(t)\,dt}{\E(Y)},$ which for~$Y\in\Exp(L_\op{BS})$ becomes~$R_N(\Delta x)=\euler^{-\frac{\Delta x}{L_\op{BS}}}$. The computation of the MS-side factor~$R_N(\Delta\vec{r})$ proceeds along the same lines and is omitted.
\end{proof}

If we consider two measurements at locations $(x,\vec{r})$, $(x+\Delta x,\Delta\vec{r})$ to be uncorrelated whenever $R_N(\Delta x,\Delta\vec{r})\leq{1}/{\euler},$ then experiments become uncorrelated if conducted at $|\Delta x|>L_\op{BS}$ apart, or $|\Delta\vec{r}|>(1-{1}/{\euler})2R_\op{C}$ apart, or a combination of both. We should also note the important fact that the ACF of the large-scale fading (LSF),~$R_\op{F}(\Delta x,\Delta\vec{r})$ is also given by Theorem~\ref{theo_cov}. The reason is, intuitively, that if the clusters' powers are assumed independent and identically distributed (which is a reasonable assumption for $|\Delta x|,$ $|\Delta\vec{r}|$ in the order of a few tens of wavelengths) and independent of the VRs, then only the proportion of common clusters affects the ACF.

\subsection{Results and Validation}To gain more insight into the behavior of the proposed BS-VR extension of the model and the influence of the mean lifetime and intensity, simulation results of a PLA and $K=9$ single-antenna users in NLOS propagation conditions are presented in Fig.~\ref{fig_cond}. Users are located in an outdoor environment at distance $R$ from the BS; the angular separation between adjacent MSs is one degree. At the BS side, a ULA with $M=128$ $\lambda/2$-spaced antennas spanning $L=7.5$ meters is used. All antennas are vertically-polarized and omnidirectional in the azimuth plane. Simulation results have been obtained from 100 runs using the parameters in Table~\ref{tab_COST2100}. The carrier frequency is 2.6~GHz and the bandwidth, sampled at 257 equispaced points, 50~MHz.

\begin{figure*}[t]
  \psfrag{condHHdB}[][]{\footnotesize{$\kappa_\op{dB}$\,/\,dB}}
  \psfrag{CDF}[][]{\footnotesize{$\Prob(\kappa_\op{dB}\leq\text{abscissa})$}}
  \psfrag{R60lambda2p3}[][]{\footnotesize{$(R=60, \lambda=2.9)$}}
  \psfrag{lambda2p3Leff3}[][]{\footnotesize{$(\lambda=2.9, L_\op{BS}=3.2)$}}
  \psfrag{R60Leff3}[][]{\footnotesize{$(R=60, L_\op{BS}=3.2)$}}
  \psfrag{R60lambda2p3}[][]{\footnotesize{$(R=60, \lambda=2.6)$}}
  \psfrag{lambda2p3Leff3}[][]{\footnotesize{$(\lambda=2.6, L_\op{BS}=3)$}}
  \psfrag{R60Leff3}[][]{\footnotesize{$(R=60, L_\op{BS}=3)$}}
  \psfrag{Leffeq1mxxxx}[][]{\scriptsize{$L_\op{BS}\!=\!1.1$}}
  \psfrag{Leffeq3mxxxx}[][]{\scriptsize{$L_\op{BS}\!=\!3.2$}}
  \psfrag{Leffeq6mxxxx}[][]{\scriptsize{$L_\op{BS}\!=\!6.4$}}
  \psfrag{Leffeq12mxxxx}[][]{\scriptsize{$L_\op{BS}\!=\!12.8$}}
  \psfrag{Leffeq1m}[][]{\scriptsize{$L_\op{BS}\!=\!1$}}
  \psfrag{Leffeq3m}[][]{\scriptsize{$L_\op{BS}\!=\!3$}}
  \psfrag{Leffeq6m}[][]{\scriptsize{$L_\op{BS}\!=\!6$}}
  \psfrag{Req10m}[][]{\scriptsize{$R\!=\!10$}}
  \psfrag{Req30m}[][]{\scriptsize{$R\!=\!30$}}
  \psfrag{Req60m}[][]{\scriptsize{$R\!=\!60$}}
  \psfrag{Req100m}[][]{\scriptsize{$R\!=\!100$}}
  \psfrag{Req10mxxxx}[][]{\scriptsize{$R\!=\!10$}}
  \psfrag{Req30mxxxx}[][]{\scriptsize{$R\!=\!30$}}
  \psfrag{Req60mxxxx}[][]{\scriptsize{$R\!=\!60$}}
  \psfrag{Req100mxxxx}[][]{\scriptsize{$R\!=\!100$}}
  \psfrag{laeq1p15}[][]{\scriptsize{$\lambda\!=\!1.15$}}
  \psfrag{laeq2p30}[][]{\scriptsize{$\lambda\!=\!2.60$}}
  \psfrag{laeq4p60}[][]{\scriptsize{$\lambda\!=\!5.20$}}
  \psfrag{laeq9p20}[][]{\scriptsize{$\lambda\!=\!10.40$}}
  \psfrag{laeq1p15xxxx}[][]{\scriptsize{$\lambda\!=\!1.45$}}
  \psfrag{laeq2p30xxxx}[][]{\scriptsize{$\lambda\!=\!2.90$}}
  \psfrag{laeq4p60xxxx}[][]{\scriptsize{$\lambda\!=\!5.80$}}
  \psfrag{laeq9p20xxxx}[][]{\scriptsize{$\lambda\!=\!11.60$}}
  \centering
  \includegraphics[width=\textwidth]{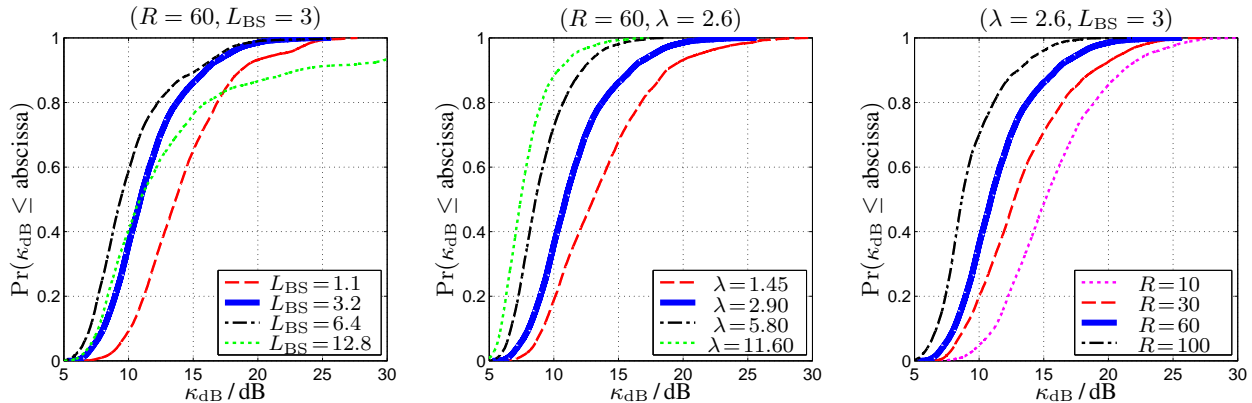}
  \caption{Comparison of the channel condition number,~$\kappa_\op{dB},$ of $K=9$ single-antenna users in an outdoor environment with NLOS propagation conditions to a PLA with $M=128$ antennas. (Left) Varying~$L_\op{BS}$ for fixed~$R, \lambda;$ (middle) Varying~$\lambda$ for fixed~$R, L_\op{BS};$ (right) Varying~$R$ for fixed~$L_\op{BS}, \lambda.$}
  \label{fig_cond}
\end{figure*}
We study the condition number, $\kappa(\vec{H})=\frac{\sigma_1(\vec{H})}{\sigma_K(\vec{H})},$ of the $K\times M$ MU-MIMO channel matrix~$\vec{H}.$ In particular, CDFs of
\begin{align}
  \kappa_\op{dB}(\vec{H}(t,f)) = 20\log_{10}(\kappa(\vec{H}(t,f)))  
\end{align}
are shown, of which several interesting remarks can be made. Firstly,~$\kappa_\op{dB}$ can decrease as the mean BS-VR lifetime,~$L_\op{BS},$ increases (for fixed $\lambda$), which is in contrast to common wisdom stipulating that spatial separability improves as more array sections are exposed to different environments. Secondly,~$\kappa_\op{dB}$ decreases as the BS-VR process birth-rate,~$\lambda,$ increases for fixed $L_\op{BS}>0.$ In this case, the reason is that larger values of $\lambda$ entail richer environments with more scatterers (clusters), which facilitates spatial separability. Thirdly,~$\kappa_\op{dB}$ decreases as the BS-MS distance,~$R,$ increases for fixed MS angular separation. By the discussion in the preceding section, the MS-side LSF ACF,~$R_N(\Delta\vec{r})$, is low for~$\frac{R}{R_\op{C}}>>1,$ thereby driving toward zero the condition number,~$\kappa_\op{dB}.$ Conversely,~$R_N(\Delta\vec{r})$ goes to one as $\frac{R}{R_\op{C}}\to 0$, the overall effect being a noticeable increase of~$\kappa_\op{dB}$. (This is despite the fact that spherical effects become more noticeable at close range, pulling the BS-side small-scale fading (SSF) ACF,~$R_N(x_1,x_2),$ toward zero.)

\section{Extension for Closely-Located Users}\label{sec_extension_2}

Using indoor measured channels, we next characterize, analyze, and validate an extension of the COST \nolinebreak[1]2100 model that describes individual MPCs at the MS side in terms of birth-death processes, and which is of particular relevance for MaMi systems with closely-located users.

\subsection{Distribution and Intensity of MPC Lifetimes}

The results of this section are based on channel measurements in a large sports hall (20~m\allowbreak$\times$36~m\allowbreak$\times$7.5~m) in the 2.6~GHz band. The radio channel between a conical monopole omnidirectional antenna moving on a 15-m-long straight-line track and a fixed cylindrical array with 128 antennas was sampled at an average rate of 400 snapshots/m. The snapshots obtained in this way were then processed using a phase-based extended Kalman filter (EKF) that was able to track the MPC parameters MPC of the propagation channel along a 15~m route. The reader is referred to~\cite{Li:2018:ekf} for further discussion of the measurements and their processing.

To analyze the lifetimes of MPCs, we adopt the modeling framework introduced in Sec.~\ref{sec_model}. Let $\mathcal{I}=\{1,\ldots,n\},$ where $n$ is the number of detected MPCs. The output of the EKF consists then of a sequence~$a_1\leq\ldots\leq a_n$ of \emph{observed} MPC birth locations such that~$x_1\leq a_i\leq x_2$ for all $i\in\mathcal{I}$, and a sequence~$\upsilon_1,\ldots,\upsilon_n$ of \emph{observed} MPC lifetimes\footnote{Technically, the observed MPC lifetimes~$\upsilon_1,\ldots,\upsilon_n$ are outcomes of the random variables $\Upsilon_1,\ldots,\Upsilon_n.$ These are assumed identically distributed, and are denoted generically by~$\Upsilon.$} such that~$\Delta_0\leq\upsilon_i\leq L$ for all $i\in\mathcal{I}$ and for some minimum feature size~$\Delta_0>0.$ For the particular track that we will study we have~$L=15$~m,~$\Delta_0=7.5$~cm, and~$n=752.$ 

As before, let~$\vec{\theta} = [\lambda,\vec{\theta}_Y\rmt]\rmt$ be a $(p+1)$-dimensional vector of parameters to be estimated, and let $\vec{x}=[a_1,b_1,a_2,\dots,b_n]$ be a $2n$-dimensional vector of data. To model the MPC lifetimes we restrict attention to the following three cases:
\begin{enumerate}
\item[1)]$Y\in\Exp(L_\op{MPC})$. Thus, $f_Y(y;\vec{\theta}_Y)$ takes the form~\eqref{eq_exp_pdf} with~$\vec{\theta}_Y=[L_\op{MPC}]$ for some~$L_\op{MPC}\in\R^+$, and the likelihood function~$\mathcal{L}(\vec{\theta};\vec{x})$ is given by~\eqref{eq_like_exp}. As this and the BS-VR problem addressed in Sec.~\ref{ml_est} are formally identical, it follows that the vector~$\hat{\vec{\theta}} = [\hat{\lambda},\hat{L}_\op{MPC}]\rmt$ of MLEs is available in closed-form from~\eqref{eq_lambda_hat} and~\eqref{eq_lbs_hat} in Theorem~\ref{theo_mle}.
\item[2)]$Y\in\Lognormal_{10}(\mu_\op{MPC},\sigma^2_\op{MPC})$ is lognormally distributed. Thus, $f_Y(y;\vec{\theta}_Y)$ takes the form
  \begin{align}\label{eq_ln10}
    f_Y(y;\vec{\theta}_Y) =
    \begin{cases}
      0,\quad&\text{for }y<0;\\
      \frac{1}{y\sqrt{\psi 2\pi}}\exp(-\frac{(\log(y)-m)^2}{2\psi}),\quad&\text{for }y\ge 0,
    \end{cases}
  \end{align}
  with $\vec{\theta}_Y=[\mu_\op{MPC},\sigma^2_\op{MPC}]\rmt$ for some $\mu_\op{MPC}\in\R,$ $\sigma^2_\op{MPC}\in\R^+$ such that$$m = \mu_\op{MPC}(\frac{\log(10)}{10}),\quad\psi = \sigma^2_\op{MPC}(\frac{\log(10)}{10})^2,$$and~$\mathcal{L}(\vec{\theta};\vec{x})$ is given by~\eqref{eq_like}. As no closed-form solution is available in this case, the vector~$\hat{\vec{\theta}} = [\hat{\lambda},\hat{\mu}_\op{MPC},\hat{\sigma}^2_\op{MPC}]\rmt$ of MLEs has been obtained by solving~\eqref{eq_mle} numerically.\footnote{To facilitate the numerical evaluations in~\eqref{eq_mle}, closed-form expressions, as a function of~$\Phi(x),$ of some of the terms involved are given in~\cite{Flordelis19}.}
\item[3)]$\Upsilon\in\TLN_{10}(\mu_\op{MPC},\sigma^2_\op{MPC},\Delta_0,L),$ the \emph{observed} MPC lifetimes, follows a truncated lognormal distribution. In this case $f_\Upsilon(\upsilon;\vec{\theta}_\Upsilon)$ is given by
  \begin{align}\label{eq_tln10}
    f_\Upsilon(\upsilon;\vec{\theta}_\Upsilon,\Delta_0,L) \propto f_Y(\upsilon;\vec{\theta}_\Upsilon)1_{[\Delta_0,L]}(\upsilon)
  \end{align}
  with~$\vec{\theta}_\Upsilon\!=\![\mu_\op{MPC},\!\sigma^2_\op{MPC}]\rmt$, and~$\mu_\op{MPC},\sigma^2_\op{MPC},f_Y$ as in case 2. The vector~$\hat{\vec{\theta}}_\Upsilon\!=\![\hat{\mu}_\op{MPC},\!\hat{\sigma}^2_\op{MPC}]\rmt$ has been selected to minimize the vertical distance between the empirical CDF, obtained from channel measurements, and~$F_\Upsilon(\upsilon;\vec{\theta}_\Upsilon)$, the reference CDF, with~$\vec{\theta}_\Upsilon$ in some range of interest;~$\hat{\lambda}$ has simply been set to $\hat{\lambda} = n/L.$ We note that this case seems to be the most common approach in the literature~\cite{Mahler:2016:lifetime,Li:2018:ekf}.
\end{enumerate}

\begin{figure}[!t]
  \psfrag{closeup}[][]{\footnotesize{Close-up}}
  \psfrag{measurement}[][]{\scriptsize{Measurements}}
  \psfrag{exponentialfit}[][]{\scriptsize{Exponential}}
  \psfrag{lognormalfit}[][]{\scriptsize{Lognormal}}
  \psfrag{alternativefit}[][]{\scriptsize{Trunc. logn.}}
  \psfrag{upsilon}[][]{\footnotesize{$\upsilon$/m}}
  \psfrag{probupsilon}[][]{\footnotesize{$\Prob(\upsilon)\leq\text{abscissa}$}}
  \centering
  \includegraphics[trim={0 0 0 .2cm},clip,width=.65\textwidth]{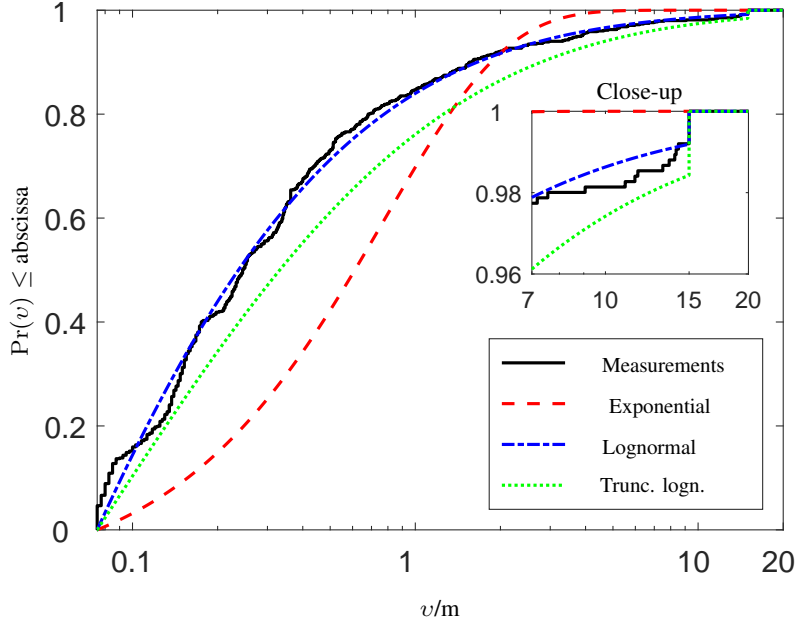}
  \caption{CDF of observed MPC lifetimes~$\upsilon_1,\ldots,\upsilon_n$ in a measured indoor scenario, and corresponding exponential, lognormal, and truncated lognormal fits ($L=15$~m, $\Delta_0=7.5$~cm, $n=752.$)}
  \label{fig_mpc_lifetimes}
\end{figure}
Fig.~\ref{fig_mpc_lifetimes} shows a comparison of the ECDF obtained from measurements, and the CDFs corresponding to cases 1), 2), and 3). According to the figure, the best fit is the lognormal one (case 2). An exponential fit (case 1) matches measurements poorly, underestimating the cumulative density function for short- and medium-lived MPCs ($\upsilon\leq 2$~m) and overestimating it for long-lived ones ($\upsilon>2$~m). The lognormal fit of case 2, on the other hand, shows an excellent agreement over the entire range of the observations. Even the proportion of lifetimes exceeding~$L$, namely~$\Prob(Y>L)$, is best explained by the lognormal distribution, as shown in the inset. Turning to the truncated lognormal case, the figure shows that when observed ($\Upsilon$) rather than true ($Y$) MPC lifetimes are modeled, the resulting fit may not be satisfactory even when the ``correct'' (lognormal) distribution is being used (see also~\cite{Mahler:2016:lifetime,Li:2018:ekf}). The problem is that the observation model of case 3) takes observations~$\upsilon_1,\ldots,\upsilon_n$ at face value, and neglects the fact that MPCs may extend beyond the measurement track, at one or both ends. The data shown in Table~\ref{tab_mpc_lifetimes} further emphasizes this point.
\begin{table}[!b]
  \renewcommand{\arraystretch}{1.5}
  \scriptsize
  \centering
  \caption{Estimated parameters and means of cases 1, 2, and 3.}
  \label{tab_mpc_lifetimes}
  \begin{tabular}{c||ccc}
    \hline
    Case&$\hat{\lambda}$&$\hat{\vec{\theta}}_Y$&$\E(Y;\hat{\vec{\theta}}_Y)$\\
    \hline\hline
    1&52.30&$\hat{L}_\op{MPC}=0.81$&0.81\\
    2&171.60&$\hat{\mu}_\op{MPC}=-16.92,$\quad$\hat{\sigma}^2_\op{MPC}=94.60$&0.25\\
    3&52.13&$\hat{\mu}_\op{MPC}=-11.09,$\quad$\hat{\sigma}^2_\op{MPC}=89.91$&0.84\\
    \hline
  \end{tabular}
\end{table}

\subsection{Distribution of MPC-VR Radii}\label{dist_radii}

In order to model the lifetime of individual MPCs within a cluster we introduce the concept of MPC visibility region (MPC-VR), which is schematically illustrated in Fig.~\ref{fig_MPCVR} in Sec.~\ref{sec_COST2100_intro}. In the figure, MPC-VRs are represented by discs contained within a certain MS-VR. There is a one-to-one correspondence between the MPC-VRs of a certain MS-VR, and the scattering points of the associated cluster. When the MS enters a MPC-VR, the corresponding scattering point becomes active and contributes to the propagation channel. To give an example, active MPC-VRs in Fig.~\ref{fig_MPCVR} have been marked gray. It is a simple but crucial observation that different MSs do not necessarily ``see'' the same set of active MPCs, even when they do have the same set of active MS-VRs; we will apply this later to establish that the SSF ACF decays more rapidly than when the concept of MPC-VR is not applied.

Let a MS move along the $X$-axis, and let MPC-VRs be distributed uniformly on the $XY$-plane. Then, the distribution of the MPC lifetimes encountered during the movement (i.e., the length of the intersecting chords) is
\begin{align}\label{eq_circ}
  \breve{F}_Y(y;R_\op{MPC}) =
  \begin{cases}
    0,\quad&\text{ for }-\infty\leq y< 0;\\
    1-\sqrt{1-(\frac{y}{2R_\op{MPC}})^2},\quad&\text{ for }0\leq y< 2R_\op{MPC};\\
    1,\quad&\text{ for }2R_\op{MPC}\leq y\leq\infty,
  \end{cases}
\end{align}
where~$R_\op{MPC}\in\R^+$ is the radius of the MPC-VRs. (For simplicity, we have assumed that the same set of MS-VRs is always visible during the movement. This assumption is often made in indoor environments~\cite{Zhu:2014:cluster,Cardona:2016:COST}.) The CDF~\eqref{eq_circ} has been plotted in Fig.~\ref{fig_circ}, top, for several values of~$R_\op{MPC}.$

To reconcile~\eqref{eq_circ} with our previous findings, namely that true MPC lifetimes appear to be lognormally distributed, we need to find~$R_\op{MPC}\in\R^+$ such that~$\breve{F}_Y(y;R_\op{MPC})$ approaches
\begin{align}\label{eq_Phi}
  F_Y(y;\hat{\vec{\theta}}_Y) = \Phi(\frac{\log(y)-m}{\psi^{1/2}})
\end{align}
with~$\hat{\vec{\theta}}_Y$ given by case 2 in Table~\ref{tab_mpc_lifetimes}, and~$m,\psi$ as before. However, a quick examination of Fig.~\ref{fig_circ}, top, shows that no such number exists: The desired lognormal curve~\eqref{eq_Phi} is concave for $y\ge 0$, but for every fixed~$R_\op{MPC}\in\R^+$,~\eqref{eq_circ} turns out to be convex in the interval $0\leq y\leq 2R_\op{MPC}.$

The solution to this difficulty is to use discs of various sizes, an idea which is supported by the measurements. Let~$R_\op{MPC}$ be a random variable with PDF~$f_{R_\op{MPC}}:\R^+\to\R^+$. Then by (the continuous version of) the law of total probability~\cite{Gut:2009:P} we have
\begin{align}\label{eq_circ_tot}
  \breve{F}_Y(y) = \frac{1}{\E(R_\op{MPC})} \int_{-\infty}^\infty r\,\breve{F}_Y(y;r)\,f_{R_\op{MPC}}(r)\,dr.
\end{align}
Thus, $\breve{F}_Y(y)$ can be thought of as a ``weighted sum'' of MPC-VR functions~\eqref{eq_circ}, and our task is to find nonnegative weights~$f_{R_\op{MPC}}(r)$ for all $r\in\R^+$ such that~$\breve{F}_Y(y)$ is close to lognormal according to some prescribed criterion of closeness. Our approach here is to adopt the least-squares criterion. As we have no a priori reason to expect~$f_{R_\op{MPC}}(r)$ to be easily obtainable by analytical methods, we solve a discretized version of problem~\eqref{eq_circ_tot}, namely
\begin{align}\label{eq_discrete}
  \minimize_{\vec{w}}\quad&(\vec{A}\vec{w}-\vec{b})\rmh(\vec{A}\vec{w}-\vec{b})\nonumber\\
  \text{subject to}\quad&w_1,\ldots,w_q\geq 0,\quad\sum_{i=1}^qw_i=1,
\end{align}
where $\vec{A}$ is a~$p\times q$ matrix with $[\vec{A}]_{ij}=\breve{F}_Y(y_i;r_j)$ as the $ij$-th entry, $\vec{w}$ is a $q\times 1$ vector of nonnegative weights~$w_1,\ldots,w_q$, and $\vec{b}$ is a $p\times 1$ vector with $F_Y(y_i;\vec{\theta})$ as the $i$-th entry. Further, we have defined the sets $\mathcal{Y}=\{y_1,\ldots,y_p\}$ of sampling lifetimes, and $\mathcal{R}=\{r_1,\dots, r_q\}$ of sampling radii. Note that the condition~$\sum_{i=1}^qw_i=1$ is needed to ensure that the mapping
\begin{align}\label{eq_discrete_w}
  p_{R_\op{MPC}}(r_i) = \Prob(R_\op{MPC}=r_i) = w_i,\quad r_i\in\mathcal{R},
\end{align}
is a proper PMF. Crucially, the optimization problem~\eqref{eq_discrete} is a quadratic program~\cite{boyd04}, and can be readily solved by using a number of available numerical packages, such as MATLAB or CVX~\cite{cvx}.\footnote{A different version of~\eqref{eq_discrete} sets $[\vec{A}]_{ij}=r_i\,\breve{F}_Y(y_i;r_j),$ $1\leq i\leq p,$ $1\leq j\leq q,$ and uses the mapping $p_{R_\op{MPC}}(r_i) = (\frac{w_i}{r_i})/(\sum\limits_{k=1}^p\frac{w_k}{r_k})$ instead of~\eqref{eq_discrete_w}. While closer to the original, continuous formulation, this alternative way of discretizing~\eqref{eq_circ_tot} is prone to numerical issues and instability for small values of $r_i.$ Because of this, in this work we restrict attention to~\eqref{eq_discrete},~\eqref{eq_discrete_w}.}

\begin{figure}[!t]
  \psfrag{Req12ldots23}[Bl][Bc]{\footnotesize{$R_\op{MPC}=1,2,\ldots,23$}}
  \psfrag{FYcircle}[][]{\footnotesize{$\breve{F}_Y(y;R_\op{MPC})$}}
  \psfrag{y}[Bc][tc]{\footnotesize{y}}
  \psfrag{FYtarget}[][]{\footnotesize{$F_Y(y;\vec{\theta})$}}
  \psfrag{r}[Bc][tc]{\footnotesize{r}}
  \psfrag{FRtarget}[][]{\footnotesize{$F_{R_\op{MPC}}(r)$}}
  \psfrag{yA}[][]{\scriptsize{\eqref{eq_ln10}}}
  \psfrag{yB}[][]{\scriptsize{\eqref{eq_discrete}}}
  \psfrag{yC}[][]{\scriptsize{\eqref{eq_discrete}}}
  \psfrag{yD}[][]{\scriptsize{\eqref{eq_discrete_ln10}}}
  \centering
  \includegraphics[trim={0 0 0 .2cm},clip,width=.60\textwidth]{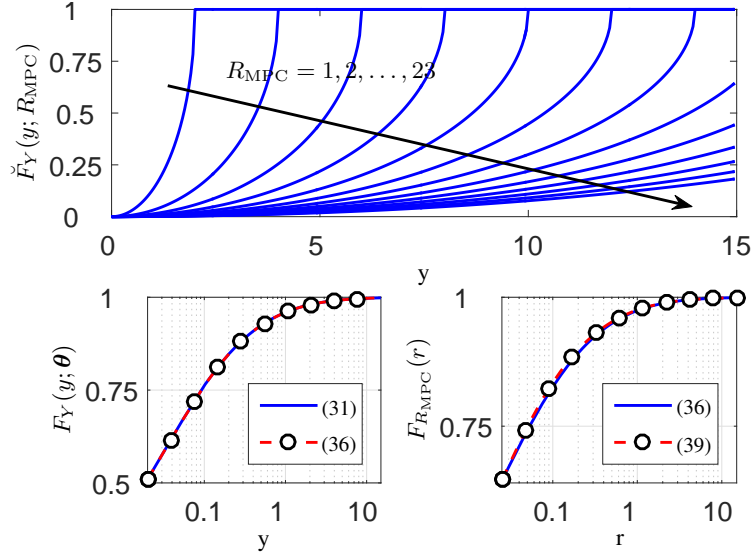}
  \caption{Empirical CDF of the observed lifetimes,~$\{\upsilon_i\}$, of the MPCs in a measured indoor scenario, and CDFs of exponential, lognormal, and truncated lognormal fits ($\Delta_0=0.075$~m, $L=15$~m). See also Table~\ref{tab_mpc_lifetimes}.}
  \label{fig_circ}
\end{figure}
Fig.~\ref{fig_circ}, bottom left, shows a comparison of the lognormal CDF~\eqref{eq_Phi} and its approximation $\tilde{F}_Y(y_i)$ obtained by solving~\eqref{eq_discrete} with~$\mathcal{Y}=\{.0025,\allowbreak.0525,\allowbreak\ldots,\allowbreak 14.9525\}$ and~$\mathcal{R}=\{.000,\allowbreak.025,\allowbreak\ldots,\allowbreak 23.000\}$, so that~$\vec{A}$ is of size $300\times 921.$ The approximation is excellent, with a root-mean-square error
\begin{align}
  \RMSE = \sqrt{\frac{\sum_{y_i\in\mathcal{Y}}(\tilde{F}_Y(y_i)-F_Y(y_i;\vec{\theta}))^2}{\sum_{y_i\in\mathcal{Y}}1}}
\end{align}
about $1.1\times 10^{-14}$ in the range $0\leq y\leq L.$ The goodness of the fit suggests that the proposed method, namely to model MPC birth-death processes using MPC-VRs with radii~$R_\op{MPC}$ drawn from a certain distribution $f_{R_\op{MPC}}:\R^+\to\R^+$, is rather flexible and, as such, applicable to a greater range of environments having different propagation characteristics. For each such environment, the appropriate distribution of~$R_\op{MPC}$ can be found by solving an optimization problem like~\eqref{eq_discrete}.

In our case, however, a further generalization is possible: Fig.~\ref{fig_circ}, bottom right, shows the cumulative sum of the weights~$w_1,\ldots,w_{921}$, and a lognormal approximating curve which has the form~\eqref{eq_Phi} with parameters
\begin{align}\label{eq_discrete_ln10}
  \mu_{R_\op{MPC}}=-19.8,\qquad\sigma^2_{R_\op{MPC}}=101.3.
\end{align}
The radii~$R_\op{MPC}$ are now drawn from a two-parameter continuous distribution which conveniently summarizes all $q=921$ weights. Again, the agreement between the two curves is excellent, with a $\RMSE\approx2.2\times 10^{-4}$ whenever~$R_\op{MPC}>.025.$ Motivated by this, we introduce in the COST \nolinebreak[1]2100 model circular MPC-VRs with lognormally distributed radii according to the parameters~$\mu_{R_\op{MPC}},$~$\sigma^2_{R_\op{MPC}}$, given in Table~\ref{tab_COST2100}.

\subsection{Distribution of MPC-VR Amplitudes}\label{sec_gain_func}We model the relative contribution of each MPC to the total channel gain by means of the so-called \emph{gain function}~\cite{Gao:2015:MAMI,Gao:2015:Extension}. More precisely, a Gaussian profile
\begin{align}\label{eq_weighting}
  g_{\op{MPC},\ell}(\vec{r}_\op{MS}) = \exp(-\frac{\dist(\vec{r}_\op{MS}-\vec{r}_{\op{g},\ell})^2}{2\sigma^2_{\op{g},\ell}})    
\end{align}
multiplies the complex amplitude of each MPC, where the weight~$g_{\op{MPC},\ell}(\vec{r}_\op{MS})$ depends on the Euclidean distance between the user position~$\vec{r}_\op{MS}=[r_{\op{MS},x},\allowbreak r_{\op{MS},y},\allowbreak r_{\op{MS},z}]\rmt$ and the center~$\vec{r}_{\op{g},\ell}=[r_{\op{g},\ell,x},r_{\op{g},\ell,y}]\rmt$ of the $\ell$-th MPC-VR measured on the $XY$-plane:
\begin{align*}
  \dist(\vec{r}_\op{MS}-\vec{r}_{\op{g},\ell})=\sqrt{(r_{\op{MS},x}-r_{\op{g},\ell,x})^2 + (r_{\op{MS},y}-r_{\op{g},\ell,y})^2}.
\end{align*}
Thus, at distance, e..g., $\sigma_{\op{g},\ell}$ from its center, the MPC gain function has decayed by 4.3~dB. The centers~$\vec{r}_{\op{g},\ell}$ are generated uniformly inside the corresponding MS-VRs, which are assumed to have radius~$R_\op{C}\in\R^+$. The profile widths~$\sigma_{\op{g},\ell}$ are identified with the radii~$R_\op{MPC}$, and follow a lognormal distribution with parameters see~\eqref{eq_discrete_ln10}. Thanks to the MPC gain function feature, variations observed~\cite{Jost:12:KEST,Wang:2012:lifetime,Li:2018:ekf} in the gain of individually tracked MPCs can be adequately modeled, helping to smoothly ramp up (fade out) the contributions of new-born (dead) MPCs, as users navigate in an out of the MPC-VRs. 

Let~$N_\op{MPC}^\op{eff}$ denote the average effective number of MPCs per cluster, defined as the average number of MPCs whos amplitude at any given point within the MPC-VR exceeds a certain value. Then, the required number of MPCs per cluster is
\begin{align}
  N_\op{MPC} = N_\op{MPC}^\op{eff}\frac{R_\op{C}^2}{\E(R_\op{MPC}^2)} = N_\op{MPC}^\op{eff}\left(\frac{R_\op{C}}{\exp(m'+\psi')}\right)^2,
\end{align}
where
\begin{align}
  m' = \mu_{R_\op{MPC}}(\frac{\log(10)}{10}),\quad\psi'=\sigma_{R_\op{MPC}}^2=(\frac{\log(10)}{10})^2,
\end{align}
and $\log$ denotes the natural logarithm. We remark that if backwards compatibility is desired, one can simply set $N_\op{MPC} = N_\op{MPC}^\op{eff},$ $\mu_{R_\op{MPC}}=\infty,$ $\sigma_{R_\op{MPC}}=0,$ and the MPC-VRs concentric with the MS-VRs.

\subsection{Results and Validation}
To validate the proposed extension, we here study the autocorrelation function and the channel condition number.

\subsubsection{Autocorrelation Function}\label{sec_acf_mpc}

One important result in this paper is the following:

\begin{theorem}[ACF of SSF]\label{theo_acf_ssf}The space-frequency ACF of the SSF is asymptotically given by
  \begin{align}
    R(\Delta\vec{r},\Delta f)\to R_\op{H}(\Delta\vec{r},\Delta f)R_\op{Y}(\Delta\vec{r})\text{ as }\E(n)\to\infty,
  \end{align}
  where $R_\op{H}(\Delta\vec{r},\Delta f)$ is the ACF of the SSF \emph{without} the gain function, and
  \begin{align}
    R_\op{Y}(\Delta\vec{r}) = \frac{\E(n_{11}(\Delta\vec{r}))}{\E(n)}.  
  \end{align}
  In particular,~$R(\Delta\vec{r},\Delta f)$ obeys the bound
  \begin{align}
    |R(\Delta\vec{r},\Delta f)| \leq |R_Y(\Delta\vec{r})|.
  \end{align}
\end{theorem}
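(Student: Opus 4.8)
The plan is to expand the transfer function into a sum over the individual MPCs, use the standard GSCM independence assumptions to collapse the resulting double sum and to factor each surviving term, and then let the (expected) number of contributing MPCs grow large so that the per-MPC averages detach from the sum. Concretely, write the transfer function \emph{with} the gain function of Sec.~\ref{sec_gain_func} as $H_g(\vec{r},f)=\sum_{\ell=1}^{n}\alpha_\ell\,g_{\op{MPC},\ell}(\vec{r})\,\psi_\ell(\vec{r},f)$, where $\alpha_\ell$ is the zero-mean, mutually uncorrelated complex amplitude of the $\ell$-th MPC, $g_{\op{MPC},\ell}$ is the Gaussian profile~\eqref{eq_weighting} with centre uniform in the host MS-VR and width $\sigma_{\op{g},\ell}$ drawn from the lognormal law of Sec.~\ref{dist_radii}, and $\psi_\ell$ is the unit-modulus delay/angle phase factor; the transfer function \emph{without} the gain function is $H(\vec{r},f)=\sum_{\ell=1}^{n}\alpha_\ell\,\psi_\ell(\vec{r},f)$. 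First I would compute $\E[H_g(\vec{r}+\Delta\vec{r},f+\Delta f)\overline{H_g(\vec{r},f)}]$: because $\E[\alpha_\ell\overline{\alpha_{\ell'}}]=0$ for $\ell\neq\ell'$ only the diagonal terms survive, and since the amplitudes, the gain parameters and the delay/angle parameters are independent, each diagonal term factors, leaving $\sum_\ell\E|\alpha_\ell|^2\,\E[g_{\op{MPC},\ell}(\vec{r}+\Delta\vec{r})g_{\op{MPC},\ell}(\vec{r})]\,\E[\psi_\ell(\vec{r}+\Delta\vec{r},f+\Delta f)\overline{\psi_\ell(\vec{r},f)}]$. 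The identical computation for $H$ shows that $R_\op{H}(\Delta\vec{r},\Delta f)$ is the power-weighted average of the phase factors $\E[\psi_\ell(\vec{r}+\Delta\vec{r},f+\Delta f)\overline{\psi_\ell(\vec{r},f)}]$, while $\E|H_g(\vec{r},f)|^2=\sum_\ell\E|\alpha_\ell|^2\,\E[g_{\op{MPC},\ell}(\vec{r})^2]$ because $|\psi_\ell|=1$.

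Next I would pass to the asymptotic regime. Treating the MPCs as statistically identical (iid powers, iid gain parameters, iid delay/angle parameters) and $n$ independent of the per-MPC quantities, Wald's identity replaces each of these sums by $\E(n)$ times a single-MPC expectation; forming the normalized ACF, the common factor $\E(n)\,\E|\alpha|^2$ cancels and $R(\Delta\vec{r},\Delta f)=\big(\E[\psi(\vec{r}+\Delta\vec{r},f+\Delta f)\overline{\psi(\vec{r},f)}]\big)\cdot\big(\E[g(\vec{r}+\Delta\vec{r})g(\vec{r})]/\E[g(\vec{r})^2]\big)=R_\op{H}(\Delta\vec{r},\Delta f)\,R_\op{Y}(\Delta\vec{r})$. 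It then remains to identify $R_\op{Y}(\Delta\vec{r})=\E[g(\vec{r}+\Delta\vec{r})g(\vec{r})]/\E[g(\vec{r})^2]$ with $\E(n_{11}(\Delta\vec{r}))/\E(n)$. Here I would invoke the birth--death framework of Sec.~\ref{sec_model} along the MS track: the MPC-VRs play the role of the visibility intervals, their intersecting chords~\eqref{eq_circ} the role of the lifetimes $Y$, and $n$, $n_{11}(\Delta\vec{r})$ are exactly the counts of~\eqref{eqsub:main} --- MPCs active at a point, respectively active at both the point $\vec{r}$ and the point $\vec{r}+\Delta\vec{r}$ (which are $|\Delta\vec{r}|$ apart on the track). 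Since $g_{\op{MPC},\ell}$ is essentially supported on its MPC-VR, $\E[\sum_\ell g_\ell(\vec{r}+\Delta\vec{r})g_\ell(\vec{r})]$ is the effective version of $\E(n_{11}(\Delta\vec{r}))$ and $\E[\sum_\ell g_\ell(\vec{r})^2]$ that of $\E(n)$; their ratio, which by the argument of Theorem~\ref{theo_cov} depends only on $|\Delta\vec{r}|$ and the chord-length law, is $R_\op{Y}(\Delta\vec{r})$. Finally, the bound $|R(\Delta\vec{r},\Delta f)|\leq|R_\op{Y}(\Delta\vec{r})|$ is immediate, since $R_\op{H}$ is a normalized autocorrelation and hence $|R_\op{H}|\leq 1$ by Cauchy--Schwarz, so $|R|=|R_\op{H}|\,|R_\op{Y}|\leq|R_\op{Y}|$.

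The step I expect to be the main obstacle is making the factorization rigorous: showing that the power-weighted sum of the \emph{products} ``(gain correlation)$\times$(phase correlation)'' equals the product of the two weighted averages. This is an exact identity only under full statistical homogeneity of the MPCs together with independence of their gain and propagation parameters; in general it requires $\E(n)\to\infty$ with a law-of-large-numbers/concentration argument, together with care that $n$ is itself random and must be handled through Wald's identity --- which is precisely why the statement is asymptotic. A related subtlety is the passage from the ``soft'' gain-function weighting to the ``hard'' visibility counts $n$, $n_{11}$ of Sec.~\ref{sec_model}, which turns the identification of $R_\op{Y}$ into an asymptotic rather than an exact equality and requires reading $n$, $n_{11}(\Delta\vec{r})$ as effective counts defined through the second moments of the gain function.
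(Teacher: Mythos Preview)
Your argument is sound and reaches the same conclusion, but it is organised differently from the paper's proof and, as you yourself flag, carries an extra identification step that the paper avoids.

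The paper does not work with the soft Gaussian weight $g_{\op{MPC},\ell}(\vec r)$ at all in the proof. Instead it writes the channel directly with a \emph{hard} visibility set,
\[
H(\vec r,f)=\sum_{\ell\in\mathcal N(\vec r)} a_\ell\,\euler^{-\ramuno(\vec k_\ell\cdot\vec r+2\pi\tau_\ell f)},
\]
so that the MPC-VR simply determines membership in the random index set~$\mathcal N(\vec r)$. Under uncorrelated scattering the cross-covariance then collapses to a sum over $\mathcal N_{11}=\mathcal N(\vec r)\cap\mathcal N(\vec r+\Delta\vec r)$; the approximation step is to replace the sum over $\mathcal N_{11}$ by $\frac{\E(n_{11})}{\E(n)}$ times the sum over $\mathcal N(\vec r)$, which becomes exact as $\E(n)\to\infty$. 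The identity $R_\op{Y}(\Delta\vec r)=\E(n_{11}(\Delta\vec r))/\E(n)$ is therefore immediate---it is the definition, not something to be argued for.

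Your route is arguably closer to the model actually implemented in Sec.~\ref{sec_gain_func} (a genuine Gaussian profile, not an indicator), and your use of independence plus Wald to factor the diagonal sum is clean. The price is the ``soft-to-hard'' identification $\E[g(\vec r+\Delta\vec r)g(\vec r)]/\E[g(\vec r)^2]\leftrightarrow\E(n_{11})/\E(n)$, which you correctly anticipate as the delicate point; the paper sidesteps it entirely by adopting the hard-visibility abstraction from the outset. Both arguments hinge on the same asymptotic decoupling (your ``product of weighted averages equals weighted average of products'' is the paper's ``sum over $\mathcal N_{11}\approx R_\op{Y}\times$ sum over $\mathcal N(\vec r)$''), and the final bound via $|R_\op{H}|\le 1$ is identical.
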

\begin{proof}We start by writing the propagation channel of the COST2100 model as
  \begin{align}\label{eq_cost_channel}
    H(\vec{r},f) = \sum_{\ell\in\mathcal{N}(\vec{r})} a_\ell\euler^{-\ramuno(\vec{k}_\ell\!\cdot\!\vec{r} + 2\pi\tau_\ell f)},
  \end{align}
  where~$\vec{k}_\ell=\frac{2\pi}{\lambda_0}\hat{\vec{k}}_\ell$ is the (vector) wavenumber,~$\hat{\vec{k}}_\ell$ is a unit vector pointing in the direction of propagation of the~$\ell$-th MPC, and we assume that the complex amplitudes~$a_\ell$ (and thus~$H(\vec{r},f))$ have zero mean. Note that the subindices~$\ell$ can possibly extend over several MPC-VRs. For any $\vec{r}\in\R^3,f\in\R$, the covariance function of the SSF is given by
  \begin{align}\label{eq_tmp_C}
    &C(\Delta\vec{r},\Delta f)\nonumber\\
    &= \E\left\{H(\vec{r},f)\rmc H(\vec{r}+\Delta\vec{r},\Delta f)\right\}\nonumber\\
    &= \E_{\mathcal{N}(\cdot),a_\ell}\!\!\left\{\!\!\left(\sum_{\ell\in\mathcal{N}(\vec{r})}a_\ell\euler^{-\ramuno(\vec{k}_\ell\!\cdot\!\vec{r} + 2\pi\tau_\ell f)}\right)\rmc \!\!\!\left(\sum_{\ell'\in\mathcal{N}(\vec{r}+\Delta\vec{r})}a_{\ell'}\euler^{-\ramuno(\vec{k}_{\ell'}\!\cdot\!(\vec{r}+\Delta\vec{r}) + 2\pi\tau_{\ell'} f)}\right)\!\!\right\}\!\!.
  \end{align}
  Now, by defining~$\mathcal{N}_{11}:=\mathcal{N}(\vec{r}) \cap \mathcal{N}(\vec{r}+\Delta\vec{r})$, and assuming uncorrelated scattering,~\eqref{eq_tmp_C} reduces to
  \begin{align*}
    C(\Delta\vec{r},\Delta f) &= \E_{\mathcal{N}_{11}}\{\sum_{\ell\in\mathcal{N}_{11}}\E(|a_\ell|^2)\euler^{-\ramuno(\vec{k}_\ell\!\cdot\!\Delta\vec{r} + 2\pi\tau_\ell\Delta f)}\}\nonumber\\
    &\approx R_Y(\Delta\vec{r})\E_{\mathcal{N}(\vec{r})}\{\sum_{\ell\in\mathcal{N}(\vec{r})}\E(|a_\ell|^2)\euler^{-\ramuno(\vec{k}_\ell\!\cdot\!\Delta\vec{r} + 2\pi\tau_\ell\Delta f)}\}\\
    &\to R_Y(\Delta\vec{r})C_H(\vec{r},f)\text{ as }\E(n)\to\infty,
  \end{align*}
  where~$C_H(\vec{r},f)$ is the covariance function of the SSF without the gain function. Finally, normalize~$C(\Delta\vec{r},\Delta f)$ by dividing by $C(0,0)=C_H(0,0)$ and the claim follows.
\end{proof}

\begin{example}\label{ex_1}Let~$R_\op{MPC}$ have distribution~$F_{R_\op{MPC}}(r)=\delta_{R_0}(r)$ for some $R_0\ge 0$, where~$\delta_{x_0}(x)$ is the Dirac delta function. Then the ACF of the number of MPCs has the form
  \begin{align}\label{eq_circ_acf}
    R_\op{Y}(d) =
    \begin{cases}
      \frac{1}{\pi}(2\chi-\sin(2\chi)),\quad&\text{ for }0\leq d\leq 2R_0;\\
      0,\quad&\text{ otherwise},
    \end{cases}
  \end{align}
  where $\chi=\cos\rmi(\frac{d/2}{R_0})$ (and so~$0\leq\chi\leq\frac{\pi}{2}$), and~$d=|\Delta\vec{r}|$ (cf. BS-side ACF of Theorem~\ref{theo_cov}). The autocorrelation function~\eqref{eq_circ_acf} is sometimes called the circular correlation function~\cite{Matern:1960,Abrahamsen:1997}. Note that~\eqref{eq_circ_acf} vanishes beyond $d=R_0$, and so the SSF becomes uncorrelated.
\end{example}

\begin{example}Let~$R_\op{MPC}$ have arbitrary PDF~$f_{R_\op{MPC}}:\R^+\to\R^+$ with finite second-order moment. Then the ACF of the number of MPCs takes the form
  \begin{align}\label{eq_circ_acf_gen}
    R_\op{Y}(d) = \frac{\frac{1}{\pi}\int_{d/2}^\infty r^2(2\chi(r)-\sin(2\chi(r))f_{R_\op{MPC}}(r)\,dr}{\int_0^\infty r^2f_{R_\op{MPC}}(r)\,dr},
  \end{align}
  with~$\chi(r) = \cos\rmi(\frac{d/2}{r})$ and, consequently, $\sin(2\chi(r)) = 2(\frac{d/2}{r})\sqrt{1-(\frac{d/2}{r})^2}.$
\end{example}

The examples above show that the MPC gain function can play an important role at reducing multiuser correlation, especially when the channel as seen by the users exhibits small angular spreads, or when the MPC lifetimes are particularly short (see Example~\ref{ex_1}). For typical angular spreads in outdoor and (especially) indoor environments, however,~$R_\op{H}(\Delta\vec{r},\Delta f)$ often decays rapidly with~$|\Delta\vec{r}|$, in effect rendering the contribution of~$R_Y(\Delta\vec{r})$ less apparent. In fact, a careful comparison\footnote{Due to lack of space, the detailed results of this comparison are omitted, and only a short summary is given. See towards the end of the next section for related comments.} of user correlation levels in (i) channels measured in an indoor environment, (ii) synthetic channels emulating the said environment and including the proposed gain function extension, and (iii) synthetic channels without the gain function extension, revealed no statistically significant differences between the three cases ($|\Delta\vec{r}|$ being in the range between 0.5 and 3.5~m). Notwithstanding the above, the importance of modeling the MPC gain function becomes clear when higher-order statistics, involving more than two users, are considered, as we show next.

\subsubsection{Channel Condition Number}

\begin{figure}[!t]
  \psfrag{dB}[][]{\footnotesize{$\kappa_\op{dB}$\,/\,dB}}
  \psfrag{Probkappaltabscissa}[][]{\footnotesize{$\Prob(\kappa_\op{dB}\leq\text{abscissa})$}}
  \psfrag{title1}[][]{\footnotesize{$M=128$}}
  \psfrag{title2}[][]{\footnotesize{$M=32$}}
  \psfrag{ON}[][]{\scriptsize{ON}}
  \psfrag{OFF}[][]{\scriptsize{OFF}}
  \psfrag{Directive}[][]{\scriptsize{Directive}}
  \psfrag{Omni}[][]{\scriptsize{\makebox[.05cm]{}Omni.}}
  \psfrag{Measured}[][]{\scriptsize{Measured}}
  \centering
  \includegraphics[width=.65\textwidth]{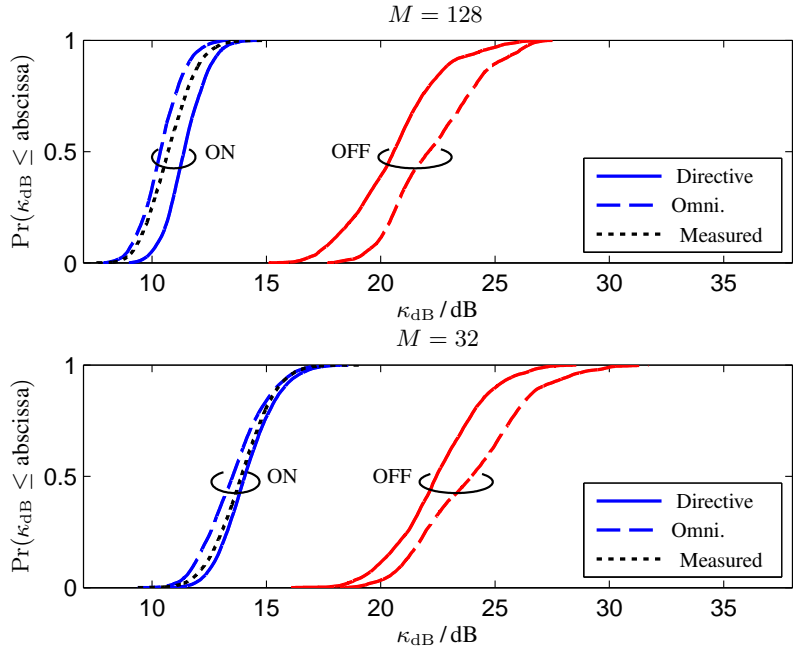}
  \caption{Comparison of simulated MU-MIMO channels of nine closely-located users in a LOS, indoor environment with (``ON'') and without (``OFF'') the proposed MPC gain function extension. Both directive and omnidirectional antenna patterns are considered. At the BS side, a compact array with $M=32,128$ antennas is used. For comparison, plots of channels measured in the same environment and also shown (black, dotted line).}
  \label{fig_kappa_indoor}
\end{figure}
It turns out that the MPC gain function has quite a noticeable effect on the condition number~$\kappa(\vec{H}(t,f)),$ where $\vec{H}(t,f)$ is the $K\times M$ channel matrix at time~$t$ and frequency~$f$ of an MU-MIMO system with $K$ single-antenna users and $M$ BS antennas. In Fig.~\ref{fig_kappa_indoor}, a comparison of simulated MU-MIMO channels with and without the MPC gain function is presented. In particular, CDFs of the logarithmic channel condition number
\begin{align}
  \kappa_\op{dB}(\vec{H}(t,f)) = 20\log_{10}(\kappa(\vec{H}(t,f)))  
\end{align}
of $K=9$ closely-located, single-antenna users in a LOS, public auditorium~\cite[indoor site MS~1]{Flordelis:acc:18} are shown. We consider both directive and omnidirectional antenna patterns. Directive antenna patterns have been measured with an upper body phantom~\cite{HarryssonWCOM10} and in our simulations are oriented towards the auditorium stage. At the BS side, a compact cylindrical array with measured antenna patterns have been used. We consider $M=32$ and $128$ BS antennas. The positions and spreads of clusters are set as in the actual measured environment, while the rest of the COST \nolinebreak[1]2100 model parameters have been extracted from measured indoor channels~\cite{MAMMOET:D1.1:2015} and are reported in Table~\ref{tab_COST2100}. Simulation results are based on data from 10 runs. For each run, $B=257$ frequency points over 50~MHz of bandwidth, and $T=10$ positions (i.e., snapshots) over 0.25~m of a straight line, one for each user, have been simulated. For comparison, the CDFs obtained from measured channels are also included (black, dotted line). Prior to computing~$\kappa_\op{dB}$, the complex gains of both simulated and measured channels have been normalized by the expression
\begin{align}
  h_{km}(t,f) &= h_{km}^\op{meas}(t,f)\left(\frac{E_k}{MTB}\right)^{-1/2},\\
  E_k&=\sum\limits_{m=1}^M\sum\limits_{t=1}^T\sum\limits_{f=1}^B\abs{h_{km}^\op{meas}(t,f)}^2,
\end{align}
where $h_{km}^\op{meas}(t,f)$ are the unnormalized channels, so that the total channel gain for each user is~$MTB.$

As can be seen, when the MPC gain function is included in the simulations (``ON'' label), a good agreement is observed between simulated and measured channels for both 32 and 128 BS antennas. On the other hand, not including it (``OFF'' label) results in too large values of~$\kappa_\op{dB}$ by about 10~dB. This gap between the ``ON'' and ``OFF'' cases is reduced by 2~dB or so when measured rather than omnidirectional antenna patterns are applied.\footnote{This can be explained as follows. When the MPC gain function is turned ``OFF'' directive antennas can, by selecting different MPCs, help decorrelating the users' radio channels. When the gain function is turned ``ON'' users already ``see'' different MPCs and directive antenna patterns, by filtering away some of the MPCs, may have an adverse effect.} This improvement alone, however, is not sufficient to explain the observed low values of~$\kappa_\op{dB}$ measured, and the introduction of a MPC gain function seems to be required for realistic channel simulations.

To gain further insights into the behavior of the MPC gain function in multiuser systems, we next study the dependence of~$\kappa_\op{dB}$ on the angular spreads of the environment. For this we adopt a simple propagation model consisting of only one cluster whose MS- and BS-side angular spreads, denoted respectively by~$\Omega_\op{MS}$ and~$\Omega_\op{BS}$, can be controlled independently. (This is called a ``twin'' cluster in the terminology of the COST \nolinebreak[1]2100 model~\cite{VerdoneCOST12}). Fig.~\ref{fig_twin} shows the average
\begin{align}
  \bar{\kappa}_\op{dB} = \frac{1}{TB}\sum_{t=1}^T\sum_{f=1}^B \kappa_\op{dB}(\vec{H}(t,f))
\end{align}
of~$\kappa_\op{dB}$ for various MS- (left) and BS-side (right) angular spread values. Here,~$T=300,$ and for each $1\leq t\leq T$ the locations of the users have been randomly assigned inside a $2~$m radius circle. Furthermore, we have set the MPC-VR radius~$R_\op{MPC}=0.5,$ the effective number of MPCs~$N_\op{MPC}^\op{eff}=100,$ the number of users~$K=9,$ and the number of BS antennas~$M=128.$

In general, the channel condition number,~$\bar{\kappa}_\op{dB}$, decreases as the angular spreads increase. When the MPC gain function is turned ``ON'',~$\bar{\kappa}_\op{dB}$ always attains a lower value. Moreover, this value mainly depends on the BS angular spread~$\Omega_\op{BS}$, and not so much on the MS angular spread~$\Omega_\op{MS}$. When, on the other hand, the MPC gain function is turned ``OFF'', we see that~$\bar{\kappa}_\op{dB}$ depends on both~$\Omega_\op{BS}$ \emph{and}~$\Omega_\op{MS}$. This has interesting implications for the design and validation of MaMi systems. Most notably, the plots in Fig.~\ref{fig_twin} suggest that users having angular spreads as small as~$\Omega_\op{MS}=5^\circ$ can still be served concurrently provided that $\Omega_\op{BS}$ and~$M$ are sufficiently large. Conversely, if the decorrelating effects of the MPC gain function are not properly accounted for, results involving closely-located users may be overly pessimistic, even for large~$\Omega_\op{BS}$ and~$M$.
\begin{figure}[!t]
  \psfrag{MSangularspreaddeg}[][]{\tiny{$\Omega_\op{MS}$\,/\,degree}}
  \psfrag{BSangularspreaddeg}[][]{\tiny{$\Omega_\op{BS}$\,/\,degree}}
  \psfrag{AverageconditionnumberdB}[][]{\tiny{$\bar{\kappa}_\op{dB}$\,/\,dB}}
  \psfrag{numusers}[][]{\tiny{Number of users,~$K$}}
  \psfrag{deltakappa}[][]{\tiny{$\delta\bar{\kappa}_\op{dB}$\,/\,dB}}
  \psfrag{title1}[][]{\tiny{$\Omega_\op{BS}=60^\circ$}}
  \psfrag{title2}[][]{\tiny{$\Omega_\op{MS}=15^\circ$}}
  \psfrag{title3}[][]{\tiny{$\Omega_\op{BS}=60^\circ$}}
  \psfrag{ON}[][]{\tiny{ON}}
  \psfrag{OFF}[][]{\tiny{OFF}}
  \psfrag{Omegadeg15}[][]{\tiny{$\Omega_\op{MS}=15^\circ$}}
  \psfrag{Omegadeg60}[][]{\tiny{$\Omega_\op{MS}=60^\circ$}}
  \psfrag{Measured}[][]{\scriptsize{Directive}}
  \psfrag{Omni}[][]{\scriptsize{\makebox[.2cm]{}Omni.}}
  \centering
  \includegraphics[width=.65\textwidth]{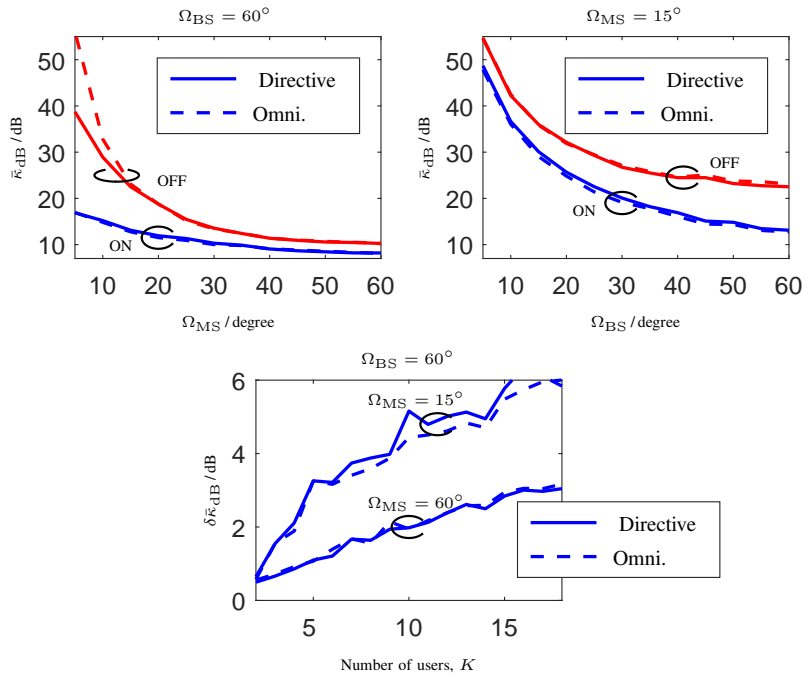}
  \caption{Average~$\bar{\kappa}_\op{dB}$ of the channel condition number of $K=9$ single-antenna MSs and a $M=128$-antennas BS connected by a single ``twin'' cluster. The MS-side (left) and the BS-side (right) cluster angular spreads, $\Omega_\op{MS}$ and $\Omega_\op{BS},$ are swept independently for the cases of ``ON''/``OFF'' MPC gain function, and directive\,/\,omnidirectional user antennas ($N_\op{MPC}^\op{eff}=100$).}
  \label{fig_twin}
\end{figure}
The significance of the MPC gain function will depend on the spatial density of concurrently served users, that is, the number of users communicating over the same time-frequency resource in a certain limited region. In the bottom pane of Fig~\ref{fig_twin}, plots of the gap
\begin{align}
  \delta\bar{\kappa}_\op{dB}=\bar{\kappa}_\op{dB,OFF}-\bar{\kappa}_\op{dB,ON}  
\end{align}
between the ``ON'' and ``OFF'' states are presented for~$\Omega_\op{BS}=60^\circ,$ $\Omega_\op{MS}=15^\circ,30^\circ,60^\circ,$ and~$K=2,\ldots,18.$ As can be seen, the gain function has a negligible effect for low values of~$K$, but its impact becomes noticeable as $K$ increases, in good agreement with our comments in Sec.~\ref{sec_acf_mpc}.

\begin{table*}[!ht]
  \renewcommand{\arraystretch}{.85}
  \scriptsize
  \centering
  \setlength\extrarowheight{-1.5pt}
  \caption{Parametrization of the COST 2100 model extension for closely-located users with PLA and CLA at~2.6~Ghz.}
  \label{tab_COST2100}
  \begin{threeparttable}
    \begin{tabular}{lrr}
      \hline
      Parameter&Outdoor VLA (NLOS)\tnote{1}&Indoor CLA (LOS)\\[-1.5pt]
      \hline
      Length of BS-VRs&&\\[-1.5pt]
      \quad$L_\op{BS}$~[m]&$3.2$&-\\[-1.5pt]
      Slope of BS-VR gain&&\\[-1.5pt]
      \quad$\mu_\op{BS}$~[dB/m]&0&-\\[-1.5pt]
      \quad$\sigma_\op{BS}$~[dB/m]&0.9&-\\[-1.5pt]
      MPC gain function&&\\[-1.5pt]
      \quad$\mu_{R_\op{MPC}}$~[dB]&-&-19.8\\[-1.5pt]
      \quad$\sigma_{R_\op{MPC}}$~[dB]&-&10.1\\[-1.5pt]
      Average number of visible far clusters&&\\[-1.5pt]
      \quad$N_\op{C}$&2.9$\times(L_\op{BS}+L)$&15\\[-1.5pt]
      Radius of the cluster visibility region&&\\[-1.5pt]
      \quad$R_\op{C}$~[m]&10&5\\[-1.5pt]
      Radius of cluster transition region&&\\[-1.5pt]
      \quad$T_\op{C}$~[m]&2&0.5\\[-1.5pt]
      Number of MPCs per cluster&&\\[-1.5pt]
      \quad$N_\op{MPC}$&31&1000\tnote{2}\\[-1.5pt]
      Cluster power decay factor&&\\[-1.5pt]
      \quad$k_\tau$~[dB/$\mu$s]&43&31\\[-1.5pt]
      Cluster cut-off delay&&\\[-1.5pt]
      \quad$\tau_\op{B}$~[$\mu$s]&0.91&0.25\\[-1.5pt]
      Cluster shadowing&&\\[-1.5pt]
      \quad$\sigma_\op{S}$~[dB]&7.6&2.7\\[-1.5pt]
      Cluster delay spread&&\\[-1.5pt]
      \quad$m_\tau$~[$\mu$s]&0.14&0.005\\[-1.5pt]
      \quad$S_\tau$~[dB]&2.85&1.5\\[-1.5pt]
      Cluster angular spread in azimuth (at BS)&&\\[-1.5pt]
      \quad$m_{\psi_\op{BS}}$~[deg]&7.0&4.6\\[-1.5pt]
      \quad$S_{\psi_\op{BS}}$~[dB]&2.4&2.1\\[-1.5pt]
      Cluster angular spread in elevation (at BS)&&\\[-1.5pt]
      \quad$m_{\theta_\op{BS}}$~[deg]&0&3.7\\[-1.5pt]
      \quad$S_{\theta_\op{BS}}$~[dB]&0&2.6\\[-1.5pt]
      Cluster angular spread in azimuth (at MS)&&\\[-1.5pt]
      \quad$m_{\psi_\op{MS}}$~[deg]&19&3.6\tnote{3}\\[-1.5pt]
      \quad$S_{\psi_\op{MS}}$~[dB]&2.0&2.1\tnote{3}\\[-1.5pt]
      Cluster angular spread in elevation (at MS)&&\\[-1.5pt]
      \quad$m_{\theta_\op{MS}}$~[deg]&0&0.7\tnote{3}\\[-1.5pt]
      \quad$S_{\theta_\op{MS}}$~[dB]&0&3.6\tnote{3}\\[-1.5pt]
      Cluster spread cross-correlation&&\\[-1.5pt]
      \quad$\rho_{\sigma_\op{S}\tau}$&-0.09&-0.45\\[-1.5pt]
      \quad$\rho_{\sigma_\op{S}\psi_\op{BS}}$&0.04&-0.56\\[-1.5pt]
      \quad$\rho_{\sigma_\op{S}\theta_\op{BS}}$&0&-0.50\\[-1.5pt]
      \quad$\rho_{\tau\psi_\op{BS}}$&0.42&0.70\\[-1.5pt]
      \quad$\rho_{\tau\theta_\op{BS}}$&0&0.34\\[-1.5pt]
      \quad$\rho_{\psi_\op{BS}\theta_\op{BS}}$&0&0.50\\[-1.5pt]
      Radius of LOS visibility region&&\\[-1.5pt]
      \quad$R_\op{L}$~[m]&-&30\tnote{3}\\[-1.5pt]
      Radius of LOS transition region&&\\[-1.5pt]
      \quad$T_\op{L}$~[m]&-&0\tnote{3}\\[-1.5pt]
      LOS power factor&&\\[-1.5pt]
      \quad$\mu_{K_\op{LOS}}$~[dB]&-&-5.2\\[-1.5pt]
      \quad$\sigma_{K_\op{LOS}}$~[dB]&-&2.9\\[-1.5pt]
      XPR&&\\[-1.5pt]
      \quad$\mu_{\op{XPR}}$~[dB]&0&9\tnote{4}\\[-1.5pt]
      \quad$\sigma_{\op{XPR}}$~[dB]&0&3\tnote{4}\\[-1.5pt]
      \hline
    \end{tabular}
    \begin{tablenotes}
      \item[1]{Reused from~\cite{Gao:2015:Extension}, except~$L_\op{BS},$ $N_\op{C}$.}
      \item[2]{$N_\op{MPC}^\op{eff}=10$.}
      \item[3]{Adopted from COST \nolinebreak[1]2100 channel model, 5~GHz indoor hall scenario~\cite{Kolmonen:2010:dual,Poutanen:2011:IndoorHall}.}
      \item[4]{Adopted from WINNER II channel models~\cite{WINNERII:2008}.}
    \end{tablenotes}
  \end{threeparttable}
\end{table*}

\section{Summary and Conclusions}Certain important aspects of the MaMi propagation channel are not captured by prevalent channel models. In this paper, we extend the conventional COST \nolinebreak[1]2100 model with the concepts of (1) BS-VRs, to model the appearance and disapperance of clusters along the axis of a physically-large array; and (2) MPC-VRs and MPC gain function, to model, at the MS side, birth-death processes of individual MPCs. Based on measurements of MaMi channels, the parameters of the proposed extensions have been statistically characterized, and their impact on the properties of MaMi propagation channels investigated theoretically and by simulations. In addition, it has been shown that simulation results are consistent with measurements of closely-located users, which demonstrates that the proposed extensions are able to capture the intended MaMi characteristics. A MATLAB implementation of the COST \nolinebreak[1]2100 model with the proposed extensions (as well as full support of other essential, but not discussed in this paper, MaMi characteristics such as spherical wavefronts and cluster dispersion in elevation) is freely available at~\cite{COST2100:GitHub}. Most importantly, this implementation provides a complete simulation framework for studies of MaMi channels for 5G and beyond.

\section*{Acknowledgment}
The presented investigations are based on data obtained in measurement campaigns performed by Xiang Gao, Fredrik Tufvesson, Ove Edfors, Tommy Hult, and Meifang Zhu, and Sohail Payami. We greatly benefited from discussions with Fredrik Rusek. This research has been conducted under the auspices of the COST-IRACON action.


\end{document}